\newcommand{\openone}{\leavevmode\hbox{\small1\normalsize\kern-.33em1}}
\def\UrlSpecials{\do\~{\kern -.15em\lower .7ex\hbox{~}\kern .04em}} \catcode`~=13
\newcommand{\nn}{\nonumber}
\newcommand{\calB}{\mathcal{B}}
\newcommand{\calE}{\mathcal{E}}
\newcommand{\calK}{\mathcal{K}}
\newcommand{\calP}{\mathcal{P}}
\newcommand{\calR}{\mathcal{R}}
\newcommand{\calT}{\mathcal{T}}
\newcommand{\calW}{\mathcal{W}}
\newcommand{\calX}{\mathcal{X}}
\newcommand{\calY}{\mathcal{Y}}
\newcommand{\bw}{\mathbf{w}}
\newcommand{\bW}{\mathbf{W}}
\newcommand{\bx}{\mathbf{x}}
\newcommand{\bX}{\mathbf{X}}
\newcommand{\by}{\mathbf{y}}
\newcommand{\bY}{\mathbf{Y}}
\newcommand{\rmc}{\mathrm{c}}
\newcommand{\rmU}{\mathrm{U}}
\newcommand{\bbE}{\mathsf{E}}
\newcommand{\bbN}{\mathbb{N}}
\newcommand{\bbR}{\mathbb{R}}
\DeclareMathAlphabet{\mathbsf}{OT1}{cmss}{bx}{n}
\DeclareMathAlphabet{\mathssf}{OT1}{cmss}{m}{sl}
\DeclareSymbolFont{bsfletters}{OT1}{cmss}{bx}{n}
\DeclareSymbolFont{ssfletters}{OT1}{cmss}{m}{n}
\DeclareMathSymbol{\bsfGamma}{0}{bsfletters}{'000}
\DeclareMathSymbol{\ssfGamma}{0}{ssfletters}{'000}
\DeclareMathSymbol{\bsfDelta}{0}{bsfletters}{'001}
\DeclareMathSymbol{\ssfDelta}{0}{ssfletters}{'001}
\DeclareMathSymbol{\bsfTheta}{0}{bsfletters}{'002}
\DeclareMathSymbol{\ssfTheta}{0}{ssfletters}{'002}
\DeclareMathSymbol{\bsfLambda}{0}{bsfletters}{'003}
\DeclareMathSymbol{\ssfLambda}{0}{ssfletters}{'003}
\DeclareMathSymbol{\bsfXi}{0}{bsfletters}{'004}
\DeclareMathSymbol{\ssfXi}{0}{ssfletters}{'004}
\DeclareMathSymbol{\bsfPi}{0}{bsfletters}{'005}
\DeclareMathSymbol{\ssfPi}{0}{ssfletters}{'005}
\DeclareMathSymbol{\bsfSigma}{0}{bsfletters}{'006}
\DeclareMathSymbol{\ssfSigma}{0}{ssfletters}{'006}
\DeclareMathSymbol{\bsfUpsilon}{0}{bsfletters}{'007}
\DeclareMathSymbol{\ssfUpsilon}{0}{ssfletters}{'007}
\DeclareMathSymbol{\bsfPhi}{0}{bsfletters}{'010}
\DeclareMathSymbol{\ssfPhi}{0}{ssfletters}{'010}
\DeclareMathSymbol{\bsfPsi}{0}{bsfletters}{'011}
\DeclareMathSymbol{\ssfPsi}{0}{ssfletters}{'011}
\DeclareMathSymbol{\bsfOmega}{0}{bsfletters}{'012}
\DeclareMathSymbol{\ssfOmega}{0}{ssfletters}{'012}
\newcommand{\tilK}{\tilde{K}}
\newcommand{\hatx}{\hat{x}}
\newcommand{\hatX}{\hat{X}}
\newcommand{\tilx}{\tilde{x}}
\newcommand{\tilX}{\tilde{X}}
\newcommand{\haty}{\hat{y}}
\newcommand{\hatY}{\hat{Y}}
\newcommand{\tily}{\tilde{y}}
\newcommand{\tilY}{\tilde{Y}}
\newcommand{\barx}{\bar{x}}
\newcommand{\bary}{\bar{y}}
\newtheorem{theorem}{Theorem}
\newtheorem{lemma}{Lemma}
\newtheorem{definition}{Definition}
\theoremstyle{plain}
\newtheorem{assumption}{Assumption}
\definecolor{Dyellow}{RGB}{254,152,0}
\definecolor{Dgreen}{RGB}{0,176,80}
\begin{document}

\title{Rate-Distortion-Perception Tradeoff for the Gray-Wyner Problem}
\author{
Yu Yang, Yingxin Zhang, Weijie Yuan and Lin Zhou
\thanks{The authors are with School of Automation and Intelligent Manufacturing, Southern University of Science and Technology, Shenzhen, China (Emails: \{yangy2024@mail.sustech.edu.cn; zhang\_yx@buaa.edu.cn; yuanwj@sustech.edu.cn; zhoul9@sustech.edu.cn\}). Yingxin Zhang is also with the School of Cyber Science and Technology, Beihang University, Beijing, China.}
}
\maketitle

\flushbottom
\allowdisplaybreaks[1]

\begin{abstract}
We revisit the Gray-Wyner lossy source coding problem and derive the first-order asymptotic optimal rate-distortion-perception region when additional perception constraints are imposed on reproduced source sequences. The optimal trade-off is shown to be governed by a mutual information term involving common information and two conditional rate-distortion-perception functions. The perception constraint requires that the distribution of each reproduced sequence is close to that of the original source sequence, which is motivated by practical applications in image and video compression. Prior studies usually focus on the compression and reconstruction of a single source sequence. In this paper, we generalize the prior results for point-to-point systems to the representative multi-terminal setting of the Gray-Wyner problem with two correlated source sequences. In particular, we integrate the analyses of the distortion and the perception constraints by including the random circular shift operator in the encoding and decoding process directly.
\end{abstract}

\begin{IEEEkeywords}
Common randomness, Perceptual quality, Lossy source coding, Random coding, Shannon theory
\end{IEEEkeywords}

\section{Introduction}\label{sec:intro}
Lossy source coding, also known as rate-distortion theory, is a fundamental problem in information theory pioneered by Shannon~\cite{Shannon_1959_Coding-theorems-for-a-discrete-source-with-a-fidelity-criterion}. The goal is to compress a source sequence and reconstruct it in a non-perfect manner, where the performance is evaluated by a distortion function that measures the difference between the original and the reconstructed source sequences, e.g., the mean squared error. However, as shown in~\cite{Blau_2018_The-Perception-Distortion-Tradeoff}, minimizing the distortion level does not necessarily lead to optimal compression performance for images in terms of human feelings such as blurriness. This issue is known as the \emph{regression to the mean} phenomenon~\cite{Blau_2018_The-Perception-Distortion-Tradeoff,Ledig_2017_Photo-Realistic-Single-Image-Super-Resolution-Using-a-Generative-Adversarial-Network}. Mathematically, the issue arises because we usually measure distortion by average statistics, which smooth out the stochastic details that are essential for human perception of images~\cite{Isola_2017_Image-to-Image-Translation-with-Conditional-Adversarial-Networks}, leading to blurriness.

To address the issue, Blau and Michaeli introduced the perception constraint and developed the rate-distortion-perception (RDP) theory~\cite{Blau_2018_The-Perception-Distortion-Tradeoff,Blau_2019_The-Rate-Distortion-Perception-Tradeoff}. The perception constraint requires that the distributions of the original and reconstructed sequences are close. In the ideal case, the reconstructed sequences should have the same distribution as the original sequence for the best perception quality. Blau and Michaeli justified the effectiveness of the perception constraint via extensive experiments and showed that the rate required to satisfy both the distortion and perception constraints is given by the rate-distortion-perception function, which resembles the rate-distortion function with an additional constraint on the distance between the distributions of the reconstructed and source symbols.

Although the theoretical benchmarks appear similar, the proof for the RDP theory requires additional efforts. The intricacies were first addressed by Theis and Wagner~\cite{Theis_2021_A-coding-theorem-for-the-rate-distortion-perception-function} and thoroughly discussed by Chen \emph{et al.}~\cite{chen_2022_on-the-rate-distortion-perception-function}. In particular, the authors of~\cite{chen_2022_on-the-rate-distortion-perception-function} identified three cases of common randomness that might lead to different theoretical benchmarks and explored several formulations of perfect perception. When common randomness is available between the encoder and the decoder, a random circular shift using the common randomness can be applied to an optimal code for rate-distortion so that the additional perception constraint is also satisfied. It was also shown that the common randomness can be removed via source simulation. A main takeaway message is that when the perception is not required to be perfect, the theoretical benchmark remains the same regardless of the availability of common randomness. In this paper, we consider non-perfect perception constraints.

The results of~\cite{Blau_2018_The-Perception-Distortion-Tradeoff,Blau_2019_The-Rate-Distortion-Perception-Tradeoff,chen_2022_on-the-rate-distortion-perception-function} have been generalized to various settings, including Gaussian sources~\cite{Qian_2023_Rate-Distortion-Perception-Tradeoff-for-Gaussian-Vector-Sources, Serra_2024_On-the-Computation-of-the-Gaussian-Rate–Distortion–Perception-Function}, the conditional perception constraint~\cite{Salehkalaibar_2024_Rate-Distortion-Perception-Tradeoff-Based-on-the-Conditional-Distribution-Perception-Measure}, the additional classification constraint~\cite{Fang_2024_The-Rate-Distortion-Perception-Classification-Tradeoff:Joint-Source-Coding-and-Modulation-via-Inverse-Domain-GANs,Wang_2025_Task-Oriented-Lossy-Compression-With-Data-Perception-and-Classification-Constraints}, the quadratic Wasserstein distance perception constraint~\cite{Qu_2025_Rate-Distortion-Perception-Theory-for-the-Quadratic-Wasserstein-Space}, the connection to lossy source coding with output constraints~\cite{Xie_2025_Output-Constrained-Lossy-Source-Coding-With-Application-to-Rate-Distortion-Perception-Theory}, computational aspects~\cite{Serra_2025_Alternating-Minimization-Schemes-for-Computing-Rate-Distortion-Perception-Functions-With-f-Divergence-Perception-Constraints}, the successive refinement setting~\cite{Zhang_2025_Universal-Rate-Distortion-Perception-Representations-for-Lossy-Compression}, the Wyner-Ziv setting~\cite{Hamdi_2023_The-Rate-Distortion-Perception-Trade-off-with-Side-Information} and the joint source-channel coding setting~\cite{Tan_2025_Rate-Distortion-Perception-Controllable-Joint-Source-Channel-Coding-for-High-Fidelity-Generative-Semantic-Communications}.

Although the above results provide insights, all these results are restricted to the point-to-point setting that compresses a single sequence. However, in many practical applications, e.g., stereo image compression~\cite{Deng_2021_Deep-Homography-for-Efficient-Stereo-Image-Compression} and neural distributed source coding~\cite{Mital_2022_Neural-Distributed-Image-Compression-Using-Common-Information}, multiple source sequences need to be compressed at one place and reconstructed at different locations. A typical model for this scenario is the Gray-Wyner problem~\cite{Gray_1974_Source-coding-for-a-simple-network}, where two correlated source sequences are compressed at one location and are required to be recovered at two different locations. Specifically, the Gray-Wyner model compresses two correlated source sequences into a common message and two private messages. The common message captures the correlated part of two source sequences while the two private messages convey the additional unique part of two source sequences. Such a layered coding framework helps reduce the compression rates by using the correlation between source sequences.

The rate-distortion region for the Gray-Wyner problem has been well studied. Specifically, for discrete source sequences, the first-order asymptotic region was derived by Gray and Wyner~\cite{Gray_1974_Source-coding-for-a-simple-network} in their pioneering paper and recently refined by Watanabe~\cite{Watanabe_2017_Second-Order-Region-for-Gray–Wyner-Network} for the lossless case and by Zhou, Tan and Motani~\cite{Zhou_2017_Discrete-Lossy-Gray–Wyner-Revisited} for the lossy case. The common information, defined as the minimal common rate subject to a constraint on the sum rate, was studied by Xu, Liu and Chen~\cite{Xu_2016_A-Lossy-Source-Coding-Interpretation-of-Wyner's-Common-Information} and Viswanatha, Akyol and Rose~\cite{Viswanatha_2014_The-Lossy-Common-Information-of-Correlated-Sources}. Explicit formulas for the rate-distortion region were recently provided by Yu~\cite{Yu_2023_Gray–Wyner-and-Mutual-Information-Regions-for-Doubly-Symmetric-Binary-Sources-and-Gaussian-Sources} for both binary and Gaussian correlated source sequences.

However, the perception constraint has not been studied for the Gray-Wyner problem yet. In this paper, we fill the above research gap and derive the first-order asymptotic optimal rate-distortion-perception region for the Gray-Wyner problem with both distortion and perception constraints. Specifically, we show that the optimal trade-off is governed by the conditional RDP functions for private messages and a mutual information term for the common message. Furthermore, our achievability proof generalizes the ideas in~\cite{chen_2022_on-the-rate-distortion-perception-function} and streamlines the proof procedures by integrating the random shift operator directly into the encoding and decoding process with common randomness. Instead of using a rate-distortion code and adding the perception analysis subsequently, we analyze both distortion and perception constraints for a coding scheme tailored for the RDP Gray-Wyner problem. We would like to emphasize that the generalization from the point-to-point setting~\cite{chen_2022_on-the-rate-distortion-perception-function} to the Gray-Wyner problem is non-trivial. This is mainly because the rate-distortion region of the Gray-Wyner problem includes an auxiliary random variable that helps characterize the rate of the common message.

\begin{figure}[t]
\centering
\begin{tikzpicture}[>=latex, thick, font=\footnotesize]
\def\blockW{1.2cm}  
\def\blockH{0.7cm}  
\def\vSep{1.2}     
\def\hSep{1.7cm}    
\def\outerW{1cm}   
\def\outerH{3.6cm}   
\node[draw, rounded corners, minimum width=1.8cm, minimum height=\outerH] (encoder) at (0,0) {};
\node[draw, rounded corners, minimum width=1.8cm, minimum height=\outerH, right=1cm of encoder] (decoder) {}; 
\node[draw, rounded corners, minimum width=\blockW, minimum height=\blockH] (f1) at (0, \vSep) {$f_{1}$};
\node[draw, rounded corners, minimum width=\blockW, minimum height=\blockH] (f0) at (0, 0.0)   {$f_{0}$};
\node[draw, rounded corners, minimum width=\blockW, minimum height=\blockH] (f2) at (0,-\vSep) {$f_{2}$};
\node[draw, rounded corners, minimum width=\blockW, minimum height=\blockH, right=\hSep of f1] (phi1) {$\phi_{1}$};
\node[draw, rounded corners, minimum width=\blockW, minimum height=\blockH, right=\hSep of f2] (phi2) {$\phi_{2}$};
\coordinate (J)  at ($(f0.west)+(-0.5,0)$);
\coordinate (Ju) at ($(J)+(0, \vSep)$);
\coordinate (Jd) at ($(J)+(0,-\vSep)$);

\node[left] at ($(J)+(0.1,0.2)$) {$(X^n,Y^n)$}; 
\draw ($(J)+(-1.0,0)$) -- (J); 
\draw (J) -- (Ju);
\draw (J) -- (Jd);

\draw[-{Latex}] (Ju) -- (f1.west);
\draw[-{Latex}] (J)  -- (f0.west);
\draw[-{Latex}] (Jd) -- (f2.west);

\coordinate (M1) at ($(f1.east)!0.5!(phi1.west)$);
\coordinate (M2) at ($(f2.east)!0.5!(phi2.west)$);
\coordinate (M0) at ($(f0.east)!0.5!(f0.east-|phi1.west)$); 

\draw[-{Latex}] (f1.east) -- (phi1.west);
\node[above, inner sep=1pt, xshift=-0.1cm] at (M1) {$S_{1}$};

\draw[-{Latex}] (f2.east) -- (phi2.west);
\node[above, inner sep=1pt, xshift=-0.1cm] at (M2) {$S_{2}$};

\draw[-{Latex}] (phi1.south) -- (phi2.north);
\draw[-{Latex}] (phi2.north) -- (phi1.south);

\coordinate (Mid) at ($(phi1.south)!0.5!(phi2.north)$);
\draw (f0.east) -- (Mid-|M0) -- (Mid); 
\node[above, inner sep=1pt, xshift=-0.1cm] at (Mid-|M0) {$S_{0}$};

\draw[-{Latex}] (phi1.east) -- ++(2.1,0) node[above,xshift=-1cm] {$(\hat{X}^{n}, D_{1}, P_{1})$};
\draw[-{Latex}] (phi2.east) -- ++(2.1,0) node[above,xshift=-1cm] {$(\hat{Y}^{n}, D_{2}, P_{2})$};

\draw[->] ($(encoder.north)+(0,0.4)$) node[above, inner sep=1pt] {$K$} -- (encoder.north);
\draw[->] ($(decoder.north)+(0,0.4)$) node[above, inner sep=1pt] {$\widetilde{K}$} -- (decoder.north);
\end{tikzpicture}
\caption{Lossy Gray-Wyner model with perception and randomness.}
\label{fig:encoder-decoder}
\end{figure}
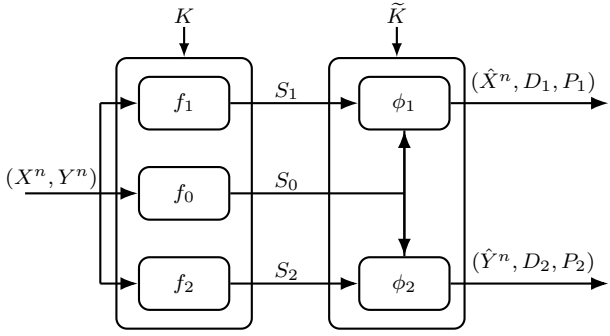

\section{Problem Formulation and Definitions}
\subsection*{Notation}
Random variables are in capital case (e.g., $X$) and their realizations are in lower case (e.g., $x$). We use calligraphic font (e.g., $\mathcal{X}$) to denote all sets. Random vectors of length $n$ and their particular realizations are denoted by $X^n:= (X_1, \ldots, X_n)$ and $x^n=(x_1,\ldots,x_n)$, respectively. Given any sets $\calX$ and $\calY$, we use $\calX\setminus\calY$ to denote the set difference, $|\calX|$ to denote the cardinality of $\calX$ and use $\calX^{\rmc}$ to denote the complement of $\calX$. We use $\bbR_+$, $\bbN$, $\bbN_+$ to denote the sets of positive real numbers, integers and positive integers respectively. For any two integers $(a,b)\in\bbN^2$, we use $[a:b]$ to denote the set of integers between $a$ and $b$, and we use $[a]$ to denote $[1:a]$. We use $\log$ and $\exp$ with base $2$. We use $\mathrm{Var}$ to denote variance. We use $\mathrm{U}$ to denote the uniform distribution.

The set of all probability distributions on a set $\calX$ is denoted as $\calP(\calX)$. Given any two sequences $(x^n,y^n)\in\calX^n\times\calY^n$, the empirical distribution of $x^n$ is denoted as $\hat{T}_{x^n}$, the joint empirical distribution of $(x^n, y^n)$ is denoted as $\hat{T}_{x^n, y^n}$, and the conditional empirical distribution of $x^n$ given $y^n$ is denoted as $\hat{T}_{x^n|y^n}$.
Given any two distributions $(P,Q)\in \calP(\calX)^2$, we use $d_{\rm{TV}}(P,Q)=\sum_{x}|P(x)-Q(x)|$ to denote the TV distance. We use $\mathbf{1}\{\}$ to denote the indicator function. Finally, we follow \cite{Gamal_2011_Network-Information-Theory} for notation of information-theoretic quantities.

\subsection{Problem Formulation}
Fix a joint distribution $P_{XY}$ defined on the finite alphabet $\calX\times\calY$. Consider memoryless source sequences $(X^n,Y^n)$ that are generated i.i.d. from $P_{XY}$. 
Let $(K, \tilde{K})$ be random variables taking values in alphabets $\calK$ and $\tilde{\calK}$, respectively.  Fix three integers $(M_0,M_1,M_2) \in \bbN^3$ and four positive real numbers $(P_1, P_2,D_1,D_2)\in\mathbb{R}_+^4$. As shown in Fig.~\ref{fig:encoder-decoder}, in the Gray-Wyner problem with both distortion and perception constraints, one aims to compress the correlated source sequences $(X^n,Y^n)$ into a common message $S_0\in[M_0]$ and two private messages $(S_1,S_2)\in[M_1]\times[M_2]$ such that the source sequences $X^n$ and $Y^n$ are reconstructed as $(\hat{X}^n,\hat{Y}^n)$ within distortion-perception levels $(D_1,P_1)$ and $(D_2,P_2)$, respectively. The encoding is done via three encoders $(f_0,f_1,f_2)$ and decoding is performed by two decoders $(\phi_1,\phi_2)$. Note that a pair of random variables $(K,\tilK)$ taking values in the alphabets $(\calK,\tilde{\calK})$ is available at the encoder and the decoder, respectively. When $K=\tilK$, the setting is known as common randomness; when $K\neq\tilK$, the setting is known as private randomness; when $(K,\tilK)$ are both constants, the setting is known as no randomness. We consider common randomness in this paper, where the common random seed $K=\tilK$ is assumed to be uniformly distributed over $[0:n-1]$. Consistent with~\cite{chen_2022_on-the-rate-distortion-perception-function}, under the non-perfect perception constraint, the common randomness assumption can be eliminated via source simulation.
 
In short, one aims to encode a pair of correlated source sequences into two private messages and one common message such that two independent decoders, each having access to the common message and only one of the private messages, can reliably reconstruct one component of the original source pair under given distortion constraints and perception constraints. 


\subsection{Definitions}
Let $\{\mathrm{cr,pr,nr}\}$ denote common randomness, private randomness, and no randomness, respectively. Fix any $\dagger\in\{\mathrm{cr,pr,nr}\}$ and positive integer $n\in\bbN_+$. A code is formally defined as follows.
\begin{definition}
\label{def:code}
An $(n,M_0,M_1,M_2,\dagger)$-code for the Gray-Wyner problem with both distortion and perception constraints consists of three encoders:
\begin{align}
f_{i} &: \ \calX^n \times \calY^n \times\calK \to [M_i],\, i\in[0:2] \label{eq:encoder}
\end{align}
and two decoders:
\begin{align}
\phi_{1} &: \ [M_0]\times[M_1] \times\tilde\calK\to \hat{\calX}^n, \label{eq:phi1}\\
\phi_{2} &: \ [M_0]\times[M_2] \times\tilde\calK\to \hat{\calY}^n. \label{eq:phi2}
\end{align}
\end{definition}
To evaluate the distortion constraints, consider the following two bounded distortion measures: $\Delta_1:\mathcal{X}\times\hat{\mathcal{X}}\to[0,\infty)$,  $\Delta_2:\mathcal{Y}\times\hat{\mathcal{Y}}\to[0,\infty)$
such that for each $(x,y)\in\mathcal{X}\times\mathcal{Y}$, there exists $(\hat x,\hat y)\in\hat{\mathcal{X}}\times\hat{\mathcal{Y}}$ satisfying
$\Delta_1(x,\hat x)=0$ and $\Delta_2(y,\hat y)=0$. To evaluate the perception constraints, consider the following two perception measures: $d_1:\calP(\calX)\times\calP(\calX)\to[0,\infty)$,  $d_2:\calP(\calY)\times\calP(\calY)\to[0,\infty)$ such that $d_1(P_X,P_{\hat{X}})=0$ if and only if (iff) $P_X=P_{\hat{X}}$ and $d_2(P_Y,P_{\hat{Y}})=0$ iff $P_Y=P_{\hat{Y}}$.

Note that the joint distribution of the correlated source sequences $(X^n,Y^n)$, the messages $(S_0,S_1,S_2)$, and the reconstructed sequences $(\hatX^n,\hatY^n)$ is induced by the source distribution $P_{XY}$ and an $(n,M_0,M_1,M_2,\dagger)$-code per Def.~\ref{def:code}. Any other distribution, including the marginal distribution of $(X_t,Y_t,\hatX_t,\hatY_t)$, is induced by this joint distribution. The rate-distortion-perception region is defined as follows.
\begin{definition}\label{definition:2}
Given any tuple of non-negative real numbers $(R_0,R_1,R_2)\in\bbR_+^3$, the rate triple $(R_0,R_1,R_2)$ is said to be $(D_1,D_2,P_1,P_2)$-achievable if there exists a sequence of $(n,M_0,M_1,M_2,\dagger)$-codes such that
\begin{itemize}
\item for all $i\in\{0,1,2\}$,
\begin{align}
\limsup_{n\to\infty}\frac{1}{n}\log M_i&\leq R_i,
\end{align}
\item the following distortion constraints are satisfied
\begin{align}
\frac{1}{n}\sum_{t\in[n]}\bbE_{P_{X_t\hat{X}_t}}\Big[\Delta_1(X_t,\hat X_t)\Big]&\leq D_1,\\
\frac{1}{n}\sum_{t\in[n]}\bbE_{P_{Y_t\hat Y_t}}\Big[\Delta_2(Y_t,\hat Y_t)\Big]&\leq D_2,
\end{align}
\item and the following perception constraints are satisfied for each $t\in [n]$,
\begin{align}
d_1(P_{X_t},P_{{\hatX}_t})&\leq P_1, \\
d_2(P_{Y_t},P_{{\hatY}_t})&\leq P_2.
\end{align}
\end{itemize}
The closure of the set of all $(D_1,D_2,P_1,P_2)$-achievable rate triplets is the $(D_1,D_2,P_1,P_2)$-achievable rate region and denoted as $\calR_{\dagger}(D_1,D_2,P_1,P_2)$.
\end{definition}
\section{Main Results}
\subsection{Preliminaries}
Recall that $(\calX,\calY)$ are the source alphabets. Let $\calW$ be another finite set and define $\calP(P_{XY})$ as the set of all joint distributions $Q_{XYW}\in\calP(\calX\times\calY\times\calW)$ such that $Q_{XY}=P_{XY}$ and $|\calW| \leq |\calX||\calY|+2$. Fix any distribution $Q_{XYW}$. Let $(Q_{XW},Q_{YW})$ be the induced distributions. Given any non-negative real numbers $(D_1,P_1)\in\mathbb R_+^2$, define the conditional
RDP function of $X$ given $W$ as follows:
\begin{align}
&R_{X|W}(Q_{XW}, D_1, P_1) \nonumber\\&:= \min_{\substack{Q_{\hat{X}|XW}: \, \bbE_{Q_{X\hat{X}}}[\Delta_1(X, \hat{X})] \le D_1, \\ d_1(P_X, Q_{\hat{X}}) \le P_1}} I(X; \hat{X}|W)\label{conditional RDP},
\end{align}
where the minimization is over $Q_{\hat{X}|XW}$ such that the induced joint distribution $Q_{XW\hat{X}}=Q_{XW}Q_{\hat{X}|XW}$ satisfies the distortion and perception constraints in \eqref{conditional RDP}. Similarly, define $R_{Y|W}(Q_{YW},D_2,P_2)$. Note that when $W$ is a constant, the conditional RDP function reduces to the RDP function for the point-to-point case~\cite[Eq. (4)]{Blau_2019_The-Rate-Distortion-Perception-Tradeoff},~\cite[Eq. (1), (2)]{chen_2022_on-the-rate-distortion-perception-function}.

Analogous to \cite{chen_2022_on-the-rate-distortion-perception-function}, we need  the following assumptions. The first assumption ensures that the RDP functions are bounded so that one can construct a coding scheme to satisfy both the distortion and perception constraints.
\begin{assumption}\label{assump:1}
Given any $(D_1,D_2,P_1,P_2)\in\bbR_+^4$, for any $Q_{XYW}\in\calP(P_{XY})$, assume that i) both $R_{X|W}(Q_{XW},D_1, P_1)$ and $R_{Y|W}(Q_{YW},D_2,P_2)$ are finite and ii) given any non-negative real number $\varepsilon\in\bbR_+$, there exist discrete random variables $(\hatX,\hatY)\in\hat{\calX}\times\hat{\calY}$ with $\hat{\mathcal{X}} \subseteq \mathcal{X}$ and $\hat{\mathcal{Y}} \subseteq \calY$ and distributions $(Q_{\hatX|XW},Q_{\hatY|YW})$ such that
\begin{align}
I(X; \hatX|W) &\le R_{X|W}(Q_{XW},D_1, P_1)+\varepsilon,\label{assump_1_1}\\
I(Y; \hatY|W) &\le R_{Y|W}(Q_{YW},D_2, P_2)+\varepsilon\label{assump_1_2},\\
\bbE_{Q_{X\hat X}}[\Delta_1(X,\hat{X})] &\le D_1 + \varepsilon,\label{assump_1_3}\\
\bbE_{Q_{Y\hat Y}}[\Delta_2(Y,\hat{Y})] &\le D_2 + \varepsilon,\label{assump_1_4}\\
d_1(P_X,Q_{\hat{X}}) &\le P_1 + \varepsilon,\label{assump_1_5}\\
d_2(P_Y,Q_{\hat{Y}}) &\le P_2 + \varepsilon,\label{assump_1_6}
\end{align}
where the joint distribution of random variables $(X,Y,W,\hatX,\hatY)$ is given by $Q_{XYW}\times Q_{\hatX|XW}\times Q_{\hatY|YW}$.
\end{assumption}

We next clarify the validity of these assumptions. Bounded assumptions for conditional RDP functions are prerequisites for the existence of a coding theorem to achieve distortion and perception constraints $(D_1,D_2,P_1,P_2)$. These two assumptions are naturally satisfied since for finite alphabets $(\calX,\calY)$, $R_{X|W}(Q_{XW},D_1, P_1) \le R_{X|W}(Q_{XW},0, 0) = H(X|W) \leq H(X)< \infty$ and similarly $R_{Y|W}(Q_{YW},D_2, P_2)\leq H(Y)< \infty$. Conditions in \eqref{assump_1_1} to \eqref{assump_1_6} also hold for discrete source sequences. Since $R_{X|W}(Q_{XW},D_1, P_1)$ is bounded, the optimization problem in \eqref{conditional RDP} is feasible, which implies the existence of a random variable $\tilde{X}$ such that 
$I(X; \tilde{X}|W) \le R_{X|W}(Q_{XW},D_1,P_1) + \varepsilon$, $\bbE_{Q_{X\tilde X}}[\Delta_1(X,\tilde{X})] \le D_1$, and $d_1(P_X, Q_{\tilde{X}}) \le P_1$. Similar logic holds for the $Y$ part. In summary, Assumption~\ref{assump:1} is typically satisfied in most practical scenarios, and always holds when $|\mathcal{X}| < \infty$. The assumptions \(\hat{\mathcal{X}} \subseteq \mathcal{X}\) and \(\hat{\mathcal{Y}} \subseteq \mathcal{Y}\) arise from the perception constraint. This is because to satisfy the perception constraint that requires negligible difference between the source distribution and the reproduced distribution, the reconstructed distributions should have the same support set as the source distributions.

The second assumption concerns the convexity of the perception function.
\begin{assumption}\label{assump:2}
For each $i\in[2]$, assume that the perception function $d_i(\cdot,\cdot)$ is convex in its second argument.
\end{assumption}
This assumption is mild as it is satisfied by the general class of \(f\)-divergence~\cite[Section 4]{Csiszar_2004_Information-theory-and-statistics}, which is defined by a convex generator function \(f\) and includes the TV distance as a special case (see Example (5) in~\cite[Section 4]{Csiszar_2004_Information-theory-and-statistics}).

\subsection{Results and Discussions}

\begin{theorem}\label{theorem:1}
Fix any $(D_1,D_2,P_1,P_2)\in\bbR_+^4$ and $\dagger\in\{\mathrm{cr,pr,nr}\}$. Under Assumptions \ref{assump:1} and \ref{assump:2}, the RDP region for the Gray-Wyner problem satisfies 
\begin{align}\label{theorem1_result}
\begin{split}
&\mathcal{R}_{\dagger}(D_1, D_2, P_1, P_2) \\
&= \bigcup_{Q_{XYW} \in \mathcal{P}(P_{XY})} \Big\{  (R_0,R_1,R_2): R_0 \geq I(X,Y;W), \\
&\quad R_1 \geq R_{X|W}(Q_{XW},D_1,P_1), R_2 \geq R_{Y|W}(Q_{YW},D_2,P_2)\Big\}.
\end{split}
\end{align}
\end{theorem}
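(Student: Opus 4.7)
The proof consists of an achievability part and a converse part.

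\textbf{Achievability.} Fix any $Q_{XYW}\in\calP(P_{XY})$ and let $Q_{\hat X|XW},Q_{\hat Y|YW}$ be the conditional distributions guaranteed by Assumption~\ref{assump:1}. The plan is to construct a three-layer random code and exploit the common randomness $K$ to control distortion and perception simultaneously. Generate a common codebook $\{W^n(j_0)\}_{j_0\in[M_0]}$ with entries drawn i.i.d.\ from $Q_W$, where $\tfrac{1}{n}\log M_0\approx I(X,Y;W)+\delta$; for each $W^n(j_0)$ generate conditional codebooks $\{\hat X^n(j_0,j_1)\}$ and $\{\hat Y^n(j_0,j_2)\}$ with entries i.i.d.\ from $Q_{\hat X|W}$ and $Q_{\hat Y|W}$, at rates just above $I(X;\hat X|W)$ and $I(Y;\hat Y|W)$ respectively. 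Upon observing $(X^n,Y^n)$ and the common seed $K\sim\Unif[0:n-1]$, the encoder first applies the circular shift $\sigma_K$ to $(X^n,Y^n)$, performs joint-typicality encoding to identify indices $(j_0,j_1,j_2)$, and transmits them; each decoder then applies $\sigma_K^{-1}$ to its reproduced codeword. Standard covering-lemma arguments ensure that such indices exist with high probability, so the expected empirical distortions approach $\bbE_{Q_{X\hat X}}[\Delta_1(X,\hat X)]\leq D_1$ and $\bbE_{Q_{Y\hat Y}}[\Delta_2(Y,\hat Y)]\leq D_2$, and these are preserved by the bijective shift.

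For the perception analysis, the key observation is that after the random circular shift the marginal of the $t$-th reproduced coordinate becomes the uniform mixture $\tfrac{1}{n}\sum_{i=1}^{n}P_{\hat X_i}$, independent of $t$; joint typicality ensures this mixture is close to $Q_{\hat X}$. Convexity of $d_1(\cdot,\cdot)$ in its second argument (Assumption~\ref{assump:2}) and Jensen's inequality then yield $d_1(P_X,P_{\hat X_t})\leq P_1+o(1)$ for every $t$, and similarly for $Y$. Since the perception target is non-perfect, the source-simulation argument of~\cite{chen_2022_on-the-rate-distortion-perception-function} removes any dependence on the randomness setting, giving $\calR_{\mathrm{cr}}=\calR_{\mathrm{pr}}=\calR_{\mathrm{nr}}$.

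\textbf{Converse.} Standard single-letterization techniques from the Gray--Wyner literature, adapted to include the common randomness $\tilde K$, produce an auxiliary $W_i$ (e.g., $W_i:=(S_0,\tilde K,X^{i-1},Y^{i-1})$) satisfying
\begin{align*}
nR_0 &\geq \textstyle\sum_{i=1}^n I(W_i;X_i,Y_i),\\
nR_1 &\geq \textstyle\sum_{i=1}^n I(\hat X_i;X_i\mid W_i),\\
nR_2 &\geq \textstyle\sum_{i=1}^n I(\hat Y_i;Y_i\mid W_i).
\end{align*}
Introducing a time-sharing variable $T\sim\Unif[n]$ and setting $W:=(W_T,T)$, $(X,Y,\hat X,\hat Y):=(X_T,Y_T,\hat X_T,\hat Y_T)$, the distortion constraints single-letterize by averaging. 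The per-letter perception constraints, together with Assumption~\ref{assump:2} and Jensen's inequality, give $d_1(P_X,P_{\hat X})\leq \tfrac{1}{n}\sum_t d_1(P_X,P_{\hat X_t})\leq P_1$, and similarly for $Y$. Each inner mutual-information sum is then lower bounded by the conditional RDP function at the induced $Q_{XYW}$, and a standard Carath\'eodory-type reduction yields the cardinality bound $|\calW|\leq|\calX||\calY|+2$.

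\textbf{Main obstacle.} The principal difficulty is the joint treatment of distortion and perception in the presence of the common-information auxiliary $W$. In the point-to-point setting of~\cite{chen_2022_on-the-rate-distortion-perception-function}, perception analysis can be layered on top of an existing rate-distortion code; in the Gray--Wyner setting, the common codebook $W^n(j_0)$ is jointly typical with both reproductions, so a \emph{single} circular shift must simultaneously render both reproduced marginals close to the respective source marginals while preserving the common-message structure. Integrating the shift operator directly into the encoder--decoder pair and exploiting convexity of the perception measure is the key technical device that makes the simultaneous analysis tractable.
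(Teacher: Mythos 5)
Your proposal takes essentially the same route as the paper: a layered Gray--Wyner random code with the circular shift built directly into the encoder/decoder, a Jensen-type single-letterization in the converse with $W_i:=(S_0,X^{i-1},Y^{i-1})$ (the paper drops $\tilde K$ from $W_i$ and instead adds zero mutual-information terms involving $K$), and source simulation to collapse all three randomness settings. The one place where your sketch is looser than the paper's argument is the perception step: the paper generates codewords \emph{uniformly over (conditional) typical sets} precisely so that the selected codeword's empirical distribution is within $\delta$ of $Q_{\tilde X}$ \emph{deterministically}, after which $Q_{\hat X_t}=\tfrac1n\sum_i Q_{\tilde X_i}$ is controlled and one invokes local Lipschitz continuity of $d_1(P_X,\cdot)$; with your i.i.d.\ codebooks the empirical distribution is close to $Q_{\hat X}$ only with high probability, and "convexity + Jensen" is not quite the right lemma here (Jensen on $Q\mapsto d_1(P_X,Q)$ bounds $d_1(P_X,\tfrac1n\sum_i P_{\hat X_i})$ by the \emph{average} of $d_1(P_X,P_{\hat X_i})$, which are not individually controlled; what you actually need is continuity of $d_1$ at $Q_{\hat X}$, as the paper uses, with a separate bound on the contribution of atypical realizations). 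These are reconcilable technicalities rather than a different proof strategy, so your outline matches the paper in substance.
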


The proof of Theorem \ref{theorem:1} is available in our journal version \cite{Yang_2026_Rate-Distortion-Perception-Tradeoff-for-the-Gray-Wyner-Problem}. In Section~\ref{Achievability Proof}, we provide a sketch of the achievability proof. We summarize the high-level ideas for the proof steps. For the achievability part, we use a layered coding scheme where private descriptions refine the common part that extracts the correlation of the source sequences. All codewords are generated from uniform distributions over typical sets or conditional typical sets. To satisfy the perception constraint, we use the random circular shift operator that helps ensure the reconstructed marginal distributions match the target distributions. Consistent with \cite[Theorem 2]{chen_2022_on-the-rate-distortion-perception-function}, we first assume common randomness and subsequently remove the common randomness via source simulation. The converse part is based on the standard converse proof for the Gray-Wyner problem, with modifications required to include the perception constraints.

We make the following remarks. Firstly, the RDP region remains the same regardless of the availability of randomness. This is done by proving the converse part for common randomness and the achievability for no randomness. Thus, deterministic encoders and decoders are optimal in the sense of achieving the first-order asymptotic rate region. Theorem \ref{theorem:1} confirms that the perception constraint maintains the fundamental separation structure of the Gray-Wyner network. Specifically, as shown in \eqref{theorem1_result}, the Gray-Wyner problem effectively captures source correlation via the common message, while satisfying the distortion and perception constraints through two independent conditional coding steps. As a sanity check, when the perception constraints are relaxed such that \( P_1 = P_2 = \infty \), the RDP region for the Gray-Wyner problem reduces to the rate-distortion region of the lossy Gray-Wyner problem~\cite[Theorem 8]{Gray_1974_Source-coding-for-a-simple-network}.

Secondly, we clarify the relationship of our results with the recent study by Zhang \emph{et al.}~\cite{Zhang_2025_Universal-Rate-Distortion-Perception-Representations-for-Lossy-Compression} for RDP of the successive refinement problem~\cite{EquitzC_1991_Successive-refinement-of-information}. The successive refinement problem can be viewed as a degraded special case of the Gray-Wyner model. Specifically, when $Y^n$ is constant and the private encoder $f_2$ is removed from the Gray-Wyner problem in Fig.~\ref{fig:encoder-decoder}, the Gray-Wyner problem reduces to the successive refinement problem. In this setup, decoder $\phi_2$ receives only the common message $S_0$ to generate a coarse reconstruction, while decoder $\phi_1$ receives both $S_0$ and the private description $S_1$ to generate a refined reconstruction. Consequently, our common rate corresponds to the base rate, and the private rate captures the incremental information required for refinement.

\section{Achievability Proof of Theorem~\ref{theorem:1}}\label{Achievability Proof}
We provide a proof sketch for the achievability part of Theorem \ref{theorem:1}, which is decomposed into codebook generation, coding scheme, distortion analysis, perception analysis and common randomness removal. At a high level, we construct a coding scheme using common randomness where all codewords are generated uniformly from (conditional) typical sets. Using the random coding idea, we show that the constructed coding scheme satisfies both the distortion and perception constraints. Finally, following the idea of de-randomization in \cite{chen_2022_on-the-rate-distortion-perception-function}, we show that the same RDP region can be achieved when common randomness is eliminated with negligible rate overhead.

To satisfy the perception constraint, consistent with~\cite{chen_2022_on-the-rate-distortion-perception-function}, we need the following random shift operator.
\begin{definition}
Given any two integers $k \in [0:n-1]$ and $t\in[n]$, we define
\begin{align}
\theta_k^{(n)}(t) := \left((t+k-1) \bmod n\right) + 1.\label{definition:3}
\end{align}
For any pair of source sequences $(x^n,y^n)\in\calX^n\times\calY^n$, let $\pi_k(x^n,y^n)$ be a circular shift operator such that 
\begin{align}
\begin{split}
\pi_k(x^n,y^n) := \Big( & \big(x_{\theta_k^{(n)}(1)},y_{\theta_k^{(n)}(1)}\big), \ldots, \\
& \big(x_{\theta_k^{(n)}(n)},y_{\theta_k^{(n)}(n)}\big)\Big).
\end{split}
\end{align}
\end{definition}
Note that $t\in[n]$ indexes the coordinate of a length-$n$ sequence, while $k\in[0:n-1]$ is the circular shift seed. The circular shift $\pi_k$ cyclically reorders the positions using $\theta_k^{(n)}(t)$, which introduces randomness to equalize the marginal distribution of each symbol of the reproduced sequence.
\\
\subsubsection{Codebook Generation}
Fix any $Q_{XYW}\in\calP(P_{XY})$. Let $Q_{\tilX|XW}$ and $Q_{\tilY|YW}$ be the optimizers for the conditional RDP function (cf. \eqref{conditional RDP}). Define the joint distribution $Q_{XYW\tilX\tilY}:=P_{XY}Q_{W|XY}Q_{\tilde X|XW}Q_{\tilde Y|YW}$.  Fix an integer $n\in\bbN$. Let $(M_0,M_1,M_2)\in\bbN^3$ be three integers to be specified later. Firstly, we generate $M_0$ codewords $\bw:=(w^n(1),\ldots,w^n(M_0))$ independently from a uniform distribution over the typical set $\calT_{\delta}^{(n)}(Q_W)$~\cite[Section 2.4]{Gamal_2011_Network-Information-Theory}. Subsequently, for each $i\in[M_0]$, we generate $M_1$ codewords $\{\tilx^n(i,1),\ldots,\tilx^n(i,M_1)\}$ independently from the uniform distribution over the conditional typical set $\calT_{\delta}^{(n)}(Q_{\tilX|W}|w^n(i))$~\cite[Section 2.5]{Gamal_2011_Network-Information-Theory} and generate $M_2$ codewords $\{\tily^n(i,1),\ldots,\tily^n(i,M_2)\}$ independently from the uniform distribution over the conditional typical set $\calT_{\delta}^{(n)}(Q_{\tilY|W}|w^n(i))$~\cite[Section 2.5]{Gamal_2011_Network-Information-Theory}.\\
\subsubsection{Coding Scheme}
Fix any $\delta\in\bbR_+$. Assume that the source sequence realization is $(x^n, y^n)\in\calX^n\times\calY^n$ and the common randomness realization is $k\in[0:n-1]$. Before compression, the encoders first compute the circularly shifted versions of the source sequences as $(\bar{x}^n, \bar{y}^n) = \pi_{-k}(x^n, y^n)$. Subsequently, the encoder $f_0$ selects an index $s_0 \in [M_0]$ such that $(\bar{x}^n, \bar{y}^n, w^n(s_0))$ are jointly typical. If none exists, $s_0$ is set to $1$; if there is more than one such index, $s_0$ is chosen as the smallest among all. Given $s_0$,  encoders $f_1$ and $f_2$ select indices $s_1 \in [M_1]$ and $s_2 \in [M_2]$ such that 
\begin{align} 
\frac{1}{n}\sum_{t\in[n]}\Delta_1(\barx_t,\tilx_t(s_0,s_1)) \leq \bbE[\Delta_1(X,\tilX)] + \frac{\delta}{2},\label{Encoder:f1}\\
\frac{1}{n}\sum_{t\in[n]}\Delta_2(\bary_t,\tily_t(s_0,s_2)) \leq \bbE[\Delta_2(Y,\tilY)] + \frac{\delta}{2}.\label{Encoder:f2}
\end{align}
Again, if none exists, the index is set to $1$, and if there is more than one index, it is chosen as the smallest among all.

Upon receiving $(s_0,s_1)$, the decoder $\phi_1$ finds the codeword $\tilx^n(s_0,s_1)$ and outputs the reproduced version as $\hatx^n=\pi_k(\tilx^n(s_0,s_1))$ using the circular shift operator. Similarly, upon receiving $(s_0,s_2)$, the decoder $\phi_2$ finds the codeword $\tily^n(s_0,s_2)$ and outputs the reproduced version as $\haty^n=\pi_k(\tily^n(s_0,s_2))$ using the circular shift operator.\\
\subsubsection{Distortion Analysis}
Define the error event for the encoder $f_0$ as
\begin{align}
\calE_0&:= \{(\bar{X}^n,\bar{Y}^n,W^n(s_0))\notin {\calT_{\delta}}^{(n)}(Q_{XYW}),\,\forall~{s_0}\in [M_0] \}.
\end{align}
Fix any positive real number $(\delta_0,\varepsilon_0)\in\bbR_+^2$ and set $M_0 := \lfloor\exp\left(n(I(X,Y;W) + 2\delta_0)\right)\rfloor$. By the standard typicality argument \cite[Lemma 3.3]{Gamal_2011_Network-Information-Theory}, for sufficiently large $n$,
\begin{align}
\Pr\{\calE_0\} \leq \varepsilon_0.
\end{align}
Thus, asymptotically with high probability, the encoder $f_1$ can proceed successfully.

In what follows, we analyze the distortion constraint for the source sequence $X^n$ since the analysis for $Y^n$ is analogous. Fix any positive real number $\delta_1\in\bbR_+$. Set $M_1 := \lfloor \exp(n(I(X;\tilX|W) + \delta_1)) \rfloor$. Following similar analysis to the rate-distortion region proof in \cite[Theorem 8]{Gray_1974_Source-coding-for-a-simple-network}, we can prove that with the chosen $M_1$, asymptotically, the distortion constraint is satisfied such that for sufficiently large $n$,
\begin{align}
\frac{1}{n} \sum_{t\in[n]} \bbE[\Delta_1(\bar{X}_t, \tilX_t)] \leq D_1.
\end{align}
Finally, since common randomness is assumed to be uniformly distributed, which leads to the uniformity of the random circular shift, the expected distortion for the original source sequence is invariant:
\begin{align}
\frac{1}{n}\sum_{t\in[n]} \bbE[\Delta_1(X_t,\hatX_{t})] &= \frac{1}{n}\sum_{t\in[n]} \bbE[\Delta_1(\bar{X}_t,\tilX_{t})] \\
&\le D_1.
\end{align}
\subsubsection{Perception Analysis} 
Again, we analyze the perception constraint for the source sequence $X^n$ since the analysis for $Y^n$ is analogous. Since $K$ is uniformly distributed over $[0:n-1]$, the marginal distribution $Q_{\hatX_t}$ is identical for all $t \in [n]$ and equals the expected empirical distribution of $\tilX^n(S_0,S_1)$. Since $\tilX^n(S_0,S_1)$ is generated uniformly over the conditional typical set, its empirical distribution converges to $Q_{\tilX}$ in total variation distance as $n \to \infty$. Consequently, $d_{\mathrm{TV}}(Q_{\hat{X}_t}, Q_{\tilX}) \to 0$ uniformly for all $t$. Furthermore, invoking Assumption 1 which implies that $Q_{\tilX}$ satisfies $d_1(P_X, Q_{\tilX}) \le P_1 + \varepsilon$ for arbitrarily small $\varepsilon$, and Assumption 2 which ensures the continuity of the perception function $d_1(P_X,Q)$ in its second argument, it follows that asymptotically,
\begin{align}
d_1(P_X, Q_{\hat{X}_t}) \leq P_1.
\end{align}
\subsubsection{Common Randomness Removal}
To remove the common randomness, we can use source simulation to generate a uniform random variable using a negligible fraction of the source sequence. Specifically, for any positive real number $\alpha\in\bbR_+$, set $n_0 := \lfloor n\alpha \rfloor$; it follows from \cite[Appendix B]{chen_2022_on-the-rate-distortion-perception-function} that we can find a mapping $\omega: \mathcal{X}^{n_0}\times\mathcal{Y}^{n_0} \to [0 : n-1]$ such that the distribution of $\tilde{K} = \omega(X_{n+1}^{n+n_0}, Y_{n+1}^{n+n_0})$ is exponentially close to the uniform distribution over $[0:n-1]$ as $n \to \infty$.

Using $\tilde{K}$ instead of the true random seed $K$, we construct a deterministic code of blocklength $n+n_0$. Let $\check{X}^{n+n_0}$ denote the reproduced source sequence generated by the aforementioned construction. It follows that
\begin{align}
\frac{1}{n+n_0} \sum_{t\in[n+n_0]}\bbE[\Delta_1(X_t, \check{X}_t)] \leq D_1,
\end{align}
for sufficiently large $n$ and small $\alpha$. The feasibility of choosing such a small $\alpha$ relies on the key observation that $K$ can be simulated using a negligible fraction of source symbols as $H(K)$ is sublinear in $n$. Furthermore, choosing $Q_{\check{X}_t|\tilde{K}} = Q_{\hat{X}_t|K}$ for $t \in [n]$, asymptotically as $n \to \infty$, the source simulation result implies that $d_{\mathrm{TV}}(Q_{\tilde{K}}, Q_K) \to 0$. Together with the triangle inequality, we have $d_{\mathrm{TV}}(Q_{\check{X}_t}, Q_{\hat{X}_t}) \to 0$ uniformly for all $t \in [n]$. Using the convergence of $Q_{\hat{X}_t}$ to $Q_{\tilX}$ in total variation distance as $\delta \to 0$ and the continuity of $d_1(P_X, Q)$ in its second argument at $Q = Q_{\tilX}$, we have
\begin{align}
d_1(P_X, Q_{\check{X}_t}) \leq P_1, \quad t \in [n+n_0].
\end{align}
Finally, invoking the fact that $\frac{\log n}{n+n_0} \to 0$ as $n \to \infty$, we conclude that the deterministic scheme achieves the rate region stated in Theorem \ref{theorem:1}. This completes the proof.

\section{Conclusion}
We revisited the Gray-Wyner problem and derived the rate-distortion-perception region. Our result generalized the RDP theory from point-to-point settings to multi-terminal distributed networks, demonstrating that deterministic coding schemes are sufficient to achieve the optimal trade-off among rate, fidelity, and perceptual quality. In particular, our proof streamlined the analyses of both distortion and perception
constraints, addressing the separation analyses in \cite{chen_2022_on-the-rate-distortion-perception-function}. In the future, one can generalize our results to the Gaussian case \cite{Qian_2023_Rate-Distortion-Perception-Tradeoff-for-Gaussian-Vector-Sources}, study the finite blocklength performance \cite{Zhou_2023_Finite-blocklength-lossy-source-coding-for-discrete-memoryless-sources} and further consider
the classification task \cite{Fang_2024_The-Rate-Distortion-Perception-Classification-Tradeoff:Joint-Source-Coding-and-Modulation-via-Inverse-Domain-GANs, Wang_2025_Task-Oriented-Lossy-Compression-With-Data-Perception-and-Classification-Constraints}.
\bibliographystyle{IEEEtran}
\bibliography{isit_v3.bib}

@ARTICLE{chen_2022_on-the-rate-distortion-perception-function,
  author={Chen, Jun and Yu, Lei and Wang, Jia and Shi, Wuxian and Ge, Yiqun and Tong, Wen},
  journal={IEEE J. Sel. Areas Inf. Theory}, 
  title={On the Rate-Distortion-Perception Function}, 
  year={2022},
  volume={3},
  number={4},
  pages={664-673},
  keywords={Encoding;Distortion;Decoding;Codes;Rate-distortion;Random variables;Distortion measurement;Common randomness;divergence;maximal coupling;perceptual quality;rate-distortion;soft-covering lemma;squared error;total variation distance},
  doi={10.1109/JSAIT.2022.3231820}}

@ARTICLE{Zhou_2017_Discrete-Lossy-Gray–Wyner-Revisited,
  author={Zhou, Lin and Tan, Vincent Y. F. and Motani, Mehul},
  journal      = {{IEEE} Trans. Inf. Theory}, 
  title={Discrete Lossy {Gray–Wyner} Revisited: Second-Order Asymptotics, Large and Moderate Deviations}, 
  year={2017},
  volume={63},
  number={3},
  pages={1766-1791},
  keywords={Source coding;Distortion;Rate-distortion;Decoding;Random variables;Reliability;Gray-Wyner problem;large deviations;moderate deviations;second-order asymptotics;uniform continuity;type covering lemma},
  doi={10.1109/TIT.2016.2644670}}

@book{Gamal_2011_Network-Information-Theory,
  author       = {Abbas El Gamal and
                  Young{-}Han Kim},
  title        = {Network Information Theory},
  publisher    = {Cambridge University Press},
  year         = {2011},
}

@book{Thomas_2006_Elements-of-information-theory,
  author       = {Thomas M. Cover and
                  Joy A. Thomas},
  title        = {Elements of information theory},
  publisher    = {Wiley},
  edition={3rd},
  year         = {2006},
}

@book{billingsley1995probability,
  title={Probability and Measure},
  author={Billingsley, Patrick},
  year={1995},
  publisher={John Wiley \& Sons},
  edition={3rd}
}

@inproceedings{Blau_2018_The-Perception-Distortion-Tradeoff,
  author       = {Yochai Blau and
                  Tomer Michaeli},
  title        = {The Perception-Distortion Tradeoff},
  booktitle    = {{IEEE} {CVPR} 2018, Salt Lake City, UT, USA, June 18-22},
  pages        = {6228--6237},
}

@inproceedings{Ledig_2017_Photo-Realistic-Single-Image-Super-Resolution-Using-a-Generative-Adversarial-Network,
  author       = {Christian Ledig and
                  Lucas Theis and
                  Ferenc Huszar and
                  Jose Caballero and
                  Andrew Cunningham and
                  Alejandro Acosta and
                  Andrew P. Aitken and
                  Alykhan Tejani and
                  Johannes Totz and
                  Zehan Wang and
                  Wenzhe Shi},
  title        = {Photo-Realistic Single Image Super-Resolution Using a Generative Adversarial
                  Network},
  booktitle    = {{IEEE} {CVPR} 2017, Honolulu, HI, USA, July 21-26},
  pages        = {105--114},
}

@inproceedings{Isola_2017_Image-to-Image-Translation-with-Conditional-Adversarial-Networks,
  author       = {Phillip Isola and
                  Jun{-}Yan Zhu and
                  Tinghui Zhou and
                  Alexei A. Efros},
  title        = {Image-to-Image Translation with Conditional Adversarial Networks},
  booktitle    = {{IEEE} {CVPR} 2017, Honolulu, HI, USA, July 21-26},
  pages        = {5967--5976},
}

@inproceedings{Blau_2019_The-Rate-Distortion-Perception-Tradeoff,
  author       = {Yochai Blau and
                  Tomer Michaeli},
  title        = {Rethinking Lossy Compression: The Rate-Distortion-Perception Tradeoff},
  booktitle    = {Proc. {ICML} 2019, 9-15 June, Long Beach, California, {USA}},
  pages        ={675--685},
}

@ARTICLE{Gray_1974_Source-coding-for-a-simple-network,
  author={Gray, R. M. and Wyner, A. D.},
  journal={Bell System Tech. J.}, 
  title={Source coding for a simple network}, 
  year={1974},
  volume={53},
  number={9},
  pages={1681-1721},
  keywords={},
  doi={10.1002/j.1538-7305.1974.tb02812.x}}

@article{Viswanatha_2014_The-Lossy-Common-Information-of-Correlated-Sources,
  author       = {Kumar Viswanatha and
                  Emrah Akyol and
                  Kenneth Rose},
  title        = {The Lossy Common Information of Correlated Sources},
  journal      = {{IEEE} Trans. Inf. Theory},
  volume       = {60},
  number       = {6},
  pages        = {3238--3253},
  year         = {2014},
}

@ARTICLE{Qian_2023_Rate-Distortion-Perception-Tradeoff-for-Gaussian-Vector-Sources,
  author={Qian, Jingjing and Salehkalaibar, Sadaf and Chen, Jun and Khisti, Ashish and Yu, Wei and Shi, Wuxian and Ge, Yiqun and Tong, Wen},
  journal={IEEE J. Sel. Areas Inf. Theory}, 
  title={Rate-Distortion-Perception Tradeoff for {Gaussian} Vector Sources}, 
  year={2025},
  volume={6},
  number={},
  pages={1-17},
  keywords={Distortion;Vectors;Encoding;Rate-distortion;Eigenvalues and eigenfunctions;Water resources;Loss measurement;Decoding;Optimization;Covariance matrices;Rate-distortion-perception function;lossy source coding;lossy compression;Gaussian vector sources;reverse water-filling},
  doi={10.1109/JSAIT.2024.3509420}}

@article{EquitzC_1991_Successive-refinement-of-information,
  author       = {William H. R. Equitz and
                  Thomas M. Cover},
  title        = {Successive refinement of information},
  journal      = {{IEEE} Trans. Inf. Theory},
  volume       = {37},
  number       = {2},
  pages        = {269--275},
  year         = {1991},
}

@ARTICLE{Zhang_2025_Universal-Rate-Distortion-Perception-Representations-for-Lossy-Compression,
  author={Zhang, George and Qian, Jingjing and Chen, Jun and Khisti, Ashish},
  journal      = {{IEEE} Trans. Inf. Theory},
  title={Universal Rate-Distortion-Perception Representations for Lossy Compression}, 
  year={2025},
  volume={71},
  number={11},
  pages={8633-8653},
  keywords={Distortion;Image coding;Rate-distortion;Decoding;Distortion measurement;Upper bound;Training;Generative adversarial networks;Deep learning;Data mining;Deep learning;image compression;perceptual quality;rate-distortion theory;successive refinement},
  doi={10.1109/TIT.2025.3584291}}

@ARTICLE{Watanabe_2017_Second-Order-Region-for-Gray–Wyner-Network,
  author={Watanabe, Shun},
  journal      = {{IEEE} Trans. Inf. Theory},
  title={Second-Order Region for {Gray–Wyner} Network}, 
  year={2017},
  volume={63},
  number={2},
  pages={1006-1018},
}

@inproceedings{Theis_2021_A-coding-theorem-for-the-rate-distortion-perception-function,
  title={A coding theorem for the rate-distortion perception function},
  author={Theis, Lucas and Wagner, Aaron B},
  booktitle={{Proc.} ICLR, 2021},
  pages={1-5}
}

@article{Xie_2025_Output-Constrained-Lossy-Source-Coding-With-Application-to-Rate-Distortion-Perception-Theory,
  author={Xie, Li and Li, Liangyan and Chen, Jun and Zhang, Zhongshan},
  journal={IEEE Trans. on Commun.}, 
  title={Output-Constrained Lossy Source Coding With Application to Rate-Distortion-Perception Theory}, 
  year={2025},
  volume={73},
  number={3},
  pages={1801-1815}
}

@article{Serra_2024_On-the-Computation-of-the-Gaussian-Rate–Distortion–Perception-Function,
  author={Serra, Giuseppe and Stavrou, Photios A. and Kountouris, Marios},
  journal={IEEE J. Sel. Areas Inf. Theory}, 
  title={On the Computation of the {Gaussian} Rate–Distortion–Perception Function}, 
  year={2024},
  volume={5},
  pages={314-330}
}

@ARTICLE{Qu_2025_Rate-Distortion-Perception-Theory-for-the-Quadratic-Wasserstein-Space,
  author={Qu, Xiqiang and Chen, Jun and Yu, Lei and Xu, Xiangyu},
  journal      = {{IEEE} Trans. Inf. Theory},
  title={Rate-Distortion-Perception Theory for the Quadratic {Wasserstein} Space}, 
  year={2025},
  volume={71},
  number={11},
  pages={8247-8261}}

@ARTICLE{Serra_2025_Alternating-Minimization-Schemes-for-Computing-Rate-Distortion-Perception-Functions-With-f-Divergence-Perception-Constraints,
  author={Serra, Giuseppe and Stavrou, Photios A. and Kountouris, Marios},
  journal      = {{IEEE} Trans. Inf. Theory},
  title={Alternating Minimization Schemes for Computing Rate-Distortion-Perception Functions With f-Divergence Perception Constraints}, 
  year={2025},
  volume={71},
  number={11},
  pages={9100-9115},
}

@ARTICLE{Salehkalaibar_2024_Rate-Distortion-Perception-Tradeoff-Based-on-the-Conditional-Distribution-Perception-Measure,
  author={Salehkalaibar, Sadaf and Chen, Jun and Khisti, Ashish and Yu, Wei},
  journal      = {{IEEE} Trans. Inf. Theory},
  title={Rate-Distortion-Perception Tradeoff Based on the Conditional-Distribution Perception Measure}, 
  year={2024},
  volume={70},
  number={12},
  pages={8432-8454},
}

@article{Fang_2024_The-Rate-Distortion-Perception-Classification-Tradeoff:Joint-Source-Coding-and-Modulation-via-Inverse-Domain-GANs,
  author={Fang, Junli and Mota, João F. C. and Lu, Baoshan and Zhang, Weicheng and Hong, Xuemin},
  journal={IEEE Trans. Signal Process.}, 
  title={The Rate-Distortion-Perception-Classification Tradeoff: Joint Source Coding and Modulation via Inverse-Domain GANs}, 
  year={2024},
  volume={72},
  pages={3076-3090}
}

@INPROCEEDINGS{Deng_2021_Deep-Homography-for-Efficient-Stereo-Image-Compression,
  author={Deng, Xin and Yang, Wenzhe and Yang, Ren and Xu, Mai and Liu, Enpeng and Feng, Qianhan and Timofte, Radu},
  booktitle={{IEEE} {CVPR} 2021}, 
  title={Deep Homography for Efficient Stereo Image Compression}, 
  pages={1492-1501},
}

@INPROCEEDINGS{Mital_2022_Neural-Distributed-Image-Compression-Using-Common-Information,
  author={Mital, Nitish and Özyılkan, Ezgi and Garjani, Ali and Gündüz, Deniz},
  booktitle={{IEEE} {DCC} 2022}, 
  title={Neural Distributed Image Compression Using Common Information}, 
  volume={},
  number={},
  pages={182-191},
}

@article{Shannon_1959_Coding-theorems-for-a-discrete-source-with-a-fidelity-criterion,
  title={Coding theorems for a discrete source with a fidelity criterion},
  author={Shannon, C. E.},
  journal={IRE Nat. Conv. Rec.},
  volume={7},
  year={1959},
  pages={142-163}
}

@ARTICLE{Yu_2023_Gray–Wyner-and-Mutual-Information-Regions-for-Doubly-Symmetric-Binary-Sources-and-Gaussian-Sources,
  author={Yu, Lei},
  journal= {{IEEE} Trans. Inf. Theory}, 
  title={{Gray–Wyner} and Mutual Information Regions for Doubly Symmetric Binary Sources and {Gaussian} Sources}, 
  year={2023},
  volume={69},
  number={10},
  pages={6251-6268},
}

@INPROCEEDINGS{Hamdi_2023_The-Rate-Distortion-Perception-Trade-off-with-Side-Information,
  author={Hamdi, Yassine and Gündüz, Deniz},
  booktitle={IEEE ISIT}, 
  title={The Rate-Distortion-Perception Trade-off with Side Information}, 
  year={2023},
  volume={},
  number={},
  pages={1056-1061},
}

@ARTICLE{Tan_2025_Rate-Distortion-Perception-Controllable-Joint-Source-Channel-Coding-for-High-Fidelity-Generative-Semantic-Communications,
  author={Tan, Kailin and Dai, Jincheng and Liu, Zhenyu and Wang, Sixian and Qin, Xiaoqi and Xu, Wenjun and Niu, Kai and Zhang, Ping},
  journal={IEEE Trans. Cogn. Commun. Netw.}, 
  title={Rate-Distortion-Perception Controllable Joint Source-Channel Coding for High-Fidelity Generative Semantic Communications}, 
  year={2025},
  volume={11},
  number={2},
  pages={672-686},
}

@article{Csiszar_2004_Information-theory-and-statistics,
  author    = {Csisz{\'a}r, I. and Shields, P. C.},
  title     = {Information theory and statistics: A Tutorial},
  journal   = {Found. Trends Commun. Inf. Theory},
  volume    = {1},
  number    = {4},
  pages     = {417--528},
  year      = {2004}
}

@incollection{Zhou_2023_Finite-blocklength-lossy-source-coding-for-discrete-memoryless-sources,
  author    = {Zhou, L. and Motani, M.},
  title     = {Finite blocklength lossy source coding for discrete memoryless sources},
  booktitle = {Found. Trends Commun. Inf. Theory},
  volume    = {20},
  number    = {3},
  pages     = {157--389},
  year      = {2023}
}

@ARTICLE{Wang_2025_Task-Oriented-Lossy-Compression-With-Data-Perception-and-Classification-Constraints,
  author={Wang, Yuhan and Wu, Youlong and Ma, Shuai and Angela Zhang, Ying-Jun},
  journal={IEEE J. Sel. Areas Commun.}, 
  title={Task-Oriented Lossy Compression With Data, Perception, and Classification Constraints}, 
  year={2025},
  volume={43},
  number={7},
  pages={2635-2650},
}

@article{Yang_2026_Rate-Distortion-Perception-Tradeoff-for-the-Gray-Wyner-Problem,
  title={Rate-Distortion-Perception Tradeoff for the {Gray-Wyner} Problem},
  author={Yang, Yu and Zhang, Yingxin and Yuan, Weijie and Zhou, Lin},
  journal={arXiv},
  year={2026}
}
\cleardoublepage
\onecolumn

\appendices
\section{Proof of First-Order Rate Region (Theorem~\ref{theorem:1})} 
Now we proceed to prove Theorem \ref{theorem:1}. Let the joint distribution corresponding to a correlated source $(X, Y)$ be $P_{XY}$. Fix any $Q_{XYW}\in \calP(P_{XY})$. Let $Q_{\tilde X|XW}$ and $Q_{\tilde Y|YW}$ be the optimizers for the conditional RDP function (cf. \eqref{conditional RDP}). Define the joint distribution $Q_{XYW\tilde{X}\tilde{Y}} := P_{XY}Q_{W|XY}Q_{\tilde X|XW}Q_{\tilde Y|YW}$. Fix an integer $n\in\bbN$. Let $(M_0,M_1,M_2)\in \bbN^3$ be three integers to be specified later. Assume that the reproduction alphabets satisfy $\tilde{\mathcal{X}} \subseteq \mathcal{X}$, $\tilde{\mathcal{Y}} \subseteq \mathcal{Y}$ with $|\tilde{\mathcal{X}}| < \infty$ and $|\tilde{\mathcal{Y}}| < \infty$. 
By Assumption~\ref{assump:1}, for any $\varepsilon \in \bigl(0,\min\{\tfrac{D_1}{2}, \tfrac{D_2}{2},\tfrac{P_1}{2},\tfrac{P_2}{2}\}\bigr)$, we can obtain
\begin{align}
I(X;\tilde{X}|W)&\leq R_{X|W}(Q_{XW},D_1-2\varepsilon,P_1-2\varepsilon)+\varepsilon,\label{assump1_1}\\
I(Y;\tilde{Y}|W)&\leq R_{Y|W}(Q_{YW},D_2-2\varepsilon,P_2-2\varepsilon)+\varepsilon,\\
\bbE_{Q_{X\tilde X}}[\Delta_1(X,\tilde{X})] &\le D_1 - \varepsilon,\label{assump1_2}\\
\bbE_{Q_{Y\tilde Y}}[\Delta_2(Y,\tilde{Y})] &\le D_2 - \varepsilon,\\
d_1(P_X,Q_{\tilde{X}})&\leq P_1 - \varepsilon,\label{assump1_3}\\
d_2(P_Y,Q_{\tilde{Y}})&\leq P_2 - \varepsilon.
\end{align}
\subsection{Achievability Proof}
To establish the achievability part of Theorem~\ref{theorem:1}, we first describe the random codebook generation. Next, we specify the encoding and decoding schemes. The proof then proceeds by analyzing the error probability to guarantee the existence of a valid common codeword that is jointly typical with the source sequence. Conditioned on this common description, we verify that private codewords can be found to satisfy the distortion constraints. Subsequently, we demonstrate that the random circular shift mechanism aligns the reconstruction distribution with the source statistics to meet the perception constraints. Finally, we discuss the removal of common randomness.

Let $K$ be uniformly distributed over $[0 : n-1]$ and independent of the source sequence $(X^n,Y^n)$, and known to both the encoders and decoders. Fix a common randomness realization $k\in[0:n-1]$, and let $(\bar{X}^n,\bar{Y}^n):=\pi_{-k}(X^n,Y^n)$ denote the circularly shifted source sequences. Note that due to the i.i.d. property of the source, $(\bar{X}^n,\bar{Y}^n)$ follows the same distribution as $(X^n,Y^n)$. We define the codebooks, encoding, and decoding functions based on these shifted sequences as follows:\\

\begin{figure*}[t]
\centering
\resizebox{\linewidth}{!}{%
\begin{tikzpicture}[
>=Latex,
node distance=12mm and 16mm,
box/.style={draw, rounded corners, align=center, inner sep=4pt},
op/.style={draw, circle, align=center, inner sep=1.5pt},
lab/.style={align=center}
]

\node[box] (src) {$(X^n,Y^n)=(x^n,y^n)$};
\coordinate (bus) at ($(src.east)+(9mm,0)$);
\node[box, right=15mm of src] (f0) {$f_0$\\choose $s_0$ s.t.\\$(\bar{x}^n,\bar{y}^n,w^n(s_0))\in \calT_\delta^{(n)}(Q_{XYW})$};
\node[box, above=9mm of f0 ] (f1) {$f_1$\\chooses $s_1$ s.t.~\eqref{Encoder:f1}};
\node[box, below=9mm of f0] (f2) {$f_2$\\chooses $s_2$ s.t.~\eqref{Encoder:f2}};
\draw(src.east) -- node[midway, above] {$\pi_{-k}$}  (bus);
\draw[->] (bus) |- (f0.west);
\draw[->] (bus) |- (f1.west);
\draw[->] (bus) |- (f2.west);
\node[box, right=14mm of f1] (phi1) {$\phi_1$\\outputs $\tilde x^n(s_0,s_1)$};
\node[box, right=14mm of f2] (phi2) {$\phi_2$\\outputs $\tilde y^n(s_0,s_2)$};
\coordinate (bus1) at (f0 -| phi1);
\draw (f0) --node[midway, above] {$s_{0}$} (bus1);
\draw[->] (bus1) --(phi1);
\draw[->] (bus1) --(phi2);
\draw[->] (f1) --node[midway, above] {$s_{1}$} (phi1);
\draw[->] (f2) --node[midway, above] {$s_{2}$} (phi2);
\node[box, right=10mm of phi1] (out1) {Reconstruction\\$\hat x^n=\pi_k(\tilde x^n(s_0,s_1))$};
\node[box, right=10mm of phi2] (out2) {Reconstruction\\$\hat y^n=\pi_k(\tilde y^n(s_0,s_2))$};
\draw[->] (phi1) --node[midway, above] {$\pi_{k}$} (out1);
\draw[->] (phi2) --node[midway, above] {$\pi_{k}$} (out2);
\end{tikzpicture}%
}
\caption{Illustration of the encoding process with common randomness.}
\label{fig:encoding-shift}
\end{figure*}
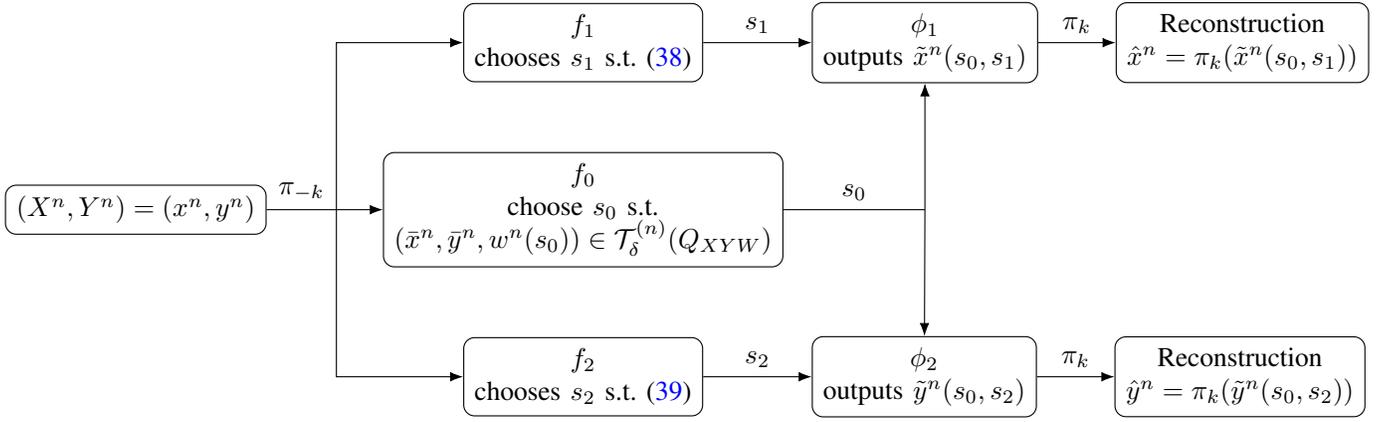

\subsubsection{Codebook Generation}
Fix any positive real number $\delta\in \bbR_+$, and define $\delta_1:=\delta\bigl(H(W) + H(W|XY)\bigr)$, $\delta^{\prime}:=\delta(H(\tilde{X}|W)+1)$, and $\delta^{\prime\prime}:=\delta(H(\tilde{Y}|W)+1)$. Firstly, we generate $M_0:=\lfloor\exp\left(n\left(I(X,Y;W) + 2\delta_1\right)\right)\rfloor$ public codewords $\bw:=\{w^n(1), \ldots, w^n(M_0)\}$ independently from a uniform distribution over the typical set $\calT_{\delta}^{(n)}(Q_W)$. Subsequently, for each $i\in[M_0]$, we generate $M_1:=\lfloor \exp({n ( I(X;\tilde X|W) + 2\delta^{\prime})}) \rfloor$ private codewords $\{\tilde{x}^n(i,1),\ldots,\tilde{x}^n(i,M_1)\}$ independently from a uniform distribution over the conditional typical set $\calT_{\delta}^{(n)}(Q_{\tilde X|W}|w^n(i))$ and generate $M_2:=\lfloor \exp({n ( I(Y;\tilde Y|W) + 2\delta^{\prime\prime} )}) \rfloor$ private codewords $\{\tilde{y}^n(i,1),\ldots,\tilde{y}^n(i,M_2)\}$ independently from a uniform distribution over the conditional typical set $\calT_{\delta}^{(n)}(Q_{\tilde Y|W}|w^n(i))$.

Here, the typical sets are defined as follows: $\calT_{\delta}^{(n)}(Q_W)$ denotes the set of $\delta$-typical sequences with respect to $Q_W$, and for any $i\in[M_0]$, $\calT_{\delta}^{(n)}(Q_{\tilde{X}|W}|w^n(i))$ and $\calT_{\delta}^{(n)}(Q_{\tilde{Y}|W}|w^n(i))$ denote the sets of conditional $\delta$-typical sequences with respect to $Q_{\tilde{X}|W}$, $Q_{\tilde{Y}|W}$ respectively. Specifically,
\begin{align}
\calT_{\delta}^{(n)}(Q_W) := \left\{ w^n \in {\mathcal{W}}^n : | \hat{T}_{w^n}(a) - Q_{W}(a) | \leq \delta Q_{W}(a),\,\forall~{a}\in\calW\right\},
\end{align}
\begin{align}
\calT_{\delta}^{(n)}(Q_{\tilde{X}|W}|w^n(i)) := \left\{ \tilde{x}^n \in \tilde{\mathcal{X}}^n : | \hat{T}_{\tilde{x}^n,w^n(i)}(a,b) - Q_{\tilde{X}W}(a,b) | \leq \delta Q_{\tilde{X}W}(a,b),\,\forall~{(a,b)\in\cal\tilde{X}\times\calW}\right\},\label{conditional_typeX}
\end{align}
\begin{align}
\calT_{\delta}^{(n)}(Q_{\tilde{Y}|W}|w^n(i)) := \left\{ \tilde{y}^n \in \tilde{\mathcal{Y}}^n : | \hat{T}_{\tilde{y}^n,w^n(i)}(a,b) - Q_{\tilde{Y}W}(a,b) | \leq \delta Q_{\tilde{Y}W}(a,b),\,\forall~{(a,b)\in\cal\tilde{Y}\times\calW}\right\}.
\end{align}\\
Thus, we have the codebooks $\tilde{\bx}=\{\tilde{x}^n(i,j)\}_{i\in[M_0],j\in[M_1]}$ and $\tilde{\by}=\{\tilde{y}^n(i,j)\}_{i\in[M_0],j\in[M_2]}$ for private messages. The three codebooks $(\bw,\tilde{\bx},\tilde{\by})$ will be used to construct our coding scheme.\\
\subsubsection{Coding Scheme}
Assume that the source sequence realization is $(x^n,y^n)\in \calX^n\times\calY^n$. Before compression, the encoders first compute the circularly shifted versions of the source sequences as $(\bar{x}^n,\bar{y}^n)=\pi_{-k}(x^n,y^n)$. Subsequently, the encoder $f_0$ selects an index $s_0\in[M_0]$ such that $(\bar{x}^n,\bar{y}^n,w^n(s_0))\in \calT_{\delta}^{(n)}(Q_{XYW})$. If there is no such index, $s_0$ is set to $1$; if there is more than one such index, $s_0$ is chosen as the smallest among all. Given $s_0\in[M_0]$, encoders $f_1$ and $f_2$ select indices $s_1\in[M_1]$ and $s_2\in[M_2]$ such that
\begin{align}
\frac{1}{n}\sum_{t\in[n]}\Delta_1(\bar{x}_t,\tilde{x}_t(s_0,s_1)) &\leq \bbE_{Q_{X\tilde{X}}}[\Delta_1(X,\tilde{X})] + \frac{\delta}{2},
\label{Encoder:f1}\\
\frac{1}{n}\sum_{t\in[n]}\Delta_2(\bar{y}_t,\tilde{y}_t(s_0,s_2)) &\leq \bbE_{Q_{Y\tilde{Y}}}[\Delta_2(Y,\tilde{Y})] + \frac{\delta}{2}. \label{Encoder:f2}
\end{align}
Again, if there is no such index, the index is set to $1$ and if there is more than one index, it is chosen as the smallest among all.

Upon receiving $(s_0,s_1)$, the decoder $\phi_1$ finds the codeword $\tilde{x}^n(s_0,s_1)$ and outputs the reproduced version as $\hat{x}^n=\pi_{k}(\tilde{x}^n(s_0,s_1))$ using the circular shift operator. Similarly, upon receiving $(s_0,s_2)$, the decoder $\phi_2$ finds the codeword $\tilde{y}^n(s_0,s_2)$ and outputs the reproduced version as $\hat{y}^n=\pi_{k}(\tilde{y}^n(s_0,s_2))$ using the circular shift operator.

In subsequent analyses, we use the random coding idea and thus consider randomness of source sequences and codewords. Thus, we consider random source sequences $(X^n,Y^n)$ and random codebooks $\bW=\{W^n(i)\}_{i\in[M_0]}$, $\tilde{\bX}=\{\tilX^n(i,j)\}_{i\in[M_0],j\in[M_1]}$ and $\tilde{\bY}=\{\tilY^n(i,j)\}_{i\in[M_0],j\in[M_2]}$, whose joint distribution satisfies that for each $(x^n,y^n,\bw,\tilde{\bx},\tilde{\by})$
\begin{align}
&P_{X^nY^n\bW\tilde{\bX}\tilde{\bY}}(x^n,y^n,\bw,\tilde{\bx},\tilde{\by})\nn\\
&=P_{XY}^n(x^n,y^n)\prod_{i\in[M_0]}\bigg(\rmU_{\calT_{\delta}^{n}(Q_W)}(w^n(i))
\times\prod_{j\in[M_1]}\rmU_{\calT_{\delta}^{n}(Q_{\tilX|W}|w^n(i))}(\tilx^n(i,j))
\times\prod_{j\in[M_2]}\rmU_{\calT_{\delta}^{n}(Q_{\tilY|W}|w^n(i))}(\tily^n(i,j))\bigg).
\end{align}
In what follows, when we use expectation or probability, it is calculated with respect to the above joint distribution or its induced (conditional) distributions.\\
\subsubsection{Distortion Analysis}
Define the event
\begin{align}
\mathcal{E}_0:= \{(\Bar{X}^n,\Bar{Y}^n,W^n(s_0)) \notin {\calT_{\delta}}^{(n)}(Q_{XYW}),\,\forall~{s_0}\in [M_0] \}.
\end{align}
By the law of total probability, we have
\begin{align}
\Pr\{\mathcal{E}_0\} 
&=\Pr_{P_{XY}^n\rmU_{\calT_{\delta}^{n}(Q_W)}}{\left\{\mathcal{E}_0,(\bar{X}^n,\Bar{Y}^n) \notin {\calT_{\delta}}^{(n)}(P_{XY})\right\}} +  \Pr_{P_{XY}^n\rmU_{\calT_{\delta}^{n}(Q_W)}}\left\{\mathcal{E}_0,(\Bar{X}^n,\Bar{Y}^n) \in {\calT_{\delta}}^{(n)}(P_{XY})\right\}\\
&\leq \Pr_{P_{XY}^n}{\left\{(\Bar{X}^n,\Bar{Y}^n) \notin {\calT_{\delta}}^{(n)}(P_{XY})\right\}} +  \Pr_{P_{XY}^n\rmU_{\calT_{\delta}^{n}(Q_W)}}\left\{\mathcal{E}_0,\,(\Bar{X}^n,\Bar{Y}^n) \in {\calT_{\delta}}^{(n)}(P_{XY})\right\}.
\end{align}
Since $(X^n,Y^n)$ is an i.i.d. sequence with distribution $P_{XY}$, the shifted sequence $(\Bar{X}^n,\Bar{Y}^n)$ has the same distribution as $(X^n,Y^n)$. Thus, by the properties of typical sets~\cite[Chapter 2]{Gamal_2011_Network-Information-Theory}, for sufficiently large $n$,
\begin{align}\label{source_joint}
\Pr_{P_{XY}^n}{\left\{(\Bar{X}^n,\Bar{Y}^n) \notin {\calT_{\delta}}^{(n)}(P_{XY})\right\}}
&\leq \varepsilon_1,
\end{align}
where $\varepsilon_1\to0$ as $n\to \infty$.
Then we can obtain
\begin{align}
&\Pr_{P_{XY}^n\rmU_{\calT_{\delta}^{n}(Q_W)}}\left\{
\mathcal{E}_0
,\,
(\Bar{X}^n,\Bar{Y}^n) \in \mathcal{T}_{\delta}^{(n)}(P_{XY})
\right\}\nn\\
&=\sum_{(\Bar{x}^n,\Bar{y}^n)\in \mathcal{T}_{\delta}^{(n)}(P_{XY})}
P_{XY}^n(\Bar{x}^n,\Bar{y}^n)
\,
\Pr_{\rmU_{\calT_{\delta}^{n}(Q_W)}}\Bigl\{
(\Bar{x}^n,\Bar{y}^n,W^n(s_0))
\notin \mathcal{T}_{\delta}^{(n)}(Q_{XYW}),\,\forall~{s_0}\in [M_0]
\Bigr\} \label{e1_1}
\\[0.5em]
&=
\sum_{(\Bar{x}^n,\Bar{y}^n)\in \mathcal{T}_{\delta}^{(n)}(P_{XY})}
P_{XY}^n(\Bar{x}^n,\Bar{y}^n)
\Bigl(
1-\sum_{w^n\in\mathcal{T}_{\delta}^{(n)}(Q_{W})} \rmU_{\calT_{\delta}^{n}(Q_W)}(w^n)\mathbf{1}\{(\Bar{x}^n,\Bar{y}^n,w^n)\in\mathcal{T}_{\delta}^{(n)}(Q_{XYW})\}
\Bigr)^{M_0}\label{e1_2}
\\[0.5em]
&=\sum_{(\Bar{x}^n,\Bar{y}^n)\in \mathcal{T}_{\delta}^{(n)}(P_{XY})}
P_{XY}^n(\Bar{x}^n,\Bar{y}^n)
\Bigl(
1 -\frac{\sum_{w^n\in\mathcal{T}_{\delta}^{(n)}(Q_{W})}\mathbf{1}\{(\Bar{x}^n,\Bar{y}^n,w^n)\in\mathcal{T}_{\delta}^{(n)}(Q_{XYW})\}}{|\mathcal{T}_{\delta}^{(n)}(Q_W)|}
\Bigr)^{M_0}\label{e1_3}
\\[0.5em]
&=
\sum_{(\Bar{x}^n,\Bar{y}^n)\in \mathcal{T}_{\delta}^{(n)}(P_{XY})}
P_{XY}^n(\Bar{x}^n,\Bar{y}^n)
\Bigl(
1 -
\frac{
\bigl|
\mathcal{T}_{\delta}^{(n)}(Q_{W|XY}\mid (\Bar{x}^n,\Bar{y}^n))
\bigr|
}{
\bigl|
\mathcal{T}_{\delta}^{(n)}(Q_W)
\bigr|
}
\Bigr)^{M_0}\label{e1_4}
\\[0.5em]
&\le
\sum_{(\Bar{x}^n,\Bar{y}^n)\in \mathcal{T}_{\delta}^{(n)}(P_{XY})}
P_{XY}^n(\Bar{x}^n,\Bar{y}^n)
\biggl(
1 -
\frac{
(1-\delta)\exp\!\Bigl(n\bigl(H(W|XY)-\delta H(W|XY)\bigr)\Bigr)
}{
\exp\!\bigl(n\bigl(H(W)+\delta H(W)\bigr)\bigr)
}
\biggr)^{M_0}\label{e1_5}
\\[0.5em]
&\leq
\Bigl(
1 -
(1-\delta)
\exp\!\left(
-n\bigl(
I(X,Y;W) + \delta(H(W)+H(W|XY))
\bigr)
\right)
\Bigr)^{M_0}\label{e1_6}
\\[0.5em]
&\le
\exp\!\Bigl(
-M_0(1-\delta)
\exp\!\left(
-n\bigl(
I(X,Y;W) + \delta(H(W)+H(W|XY))
\bigr)
\right)
\Bigr)\label{e1_7}
\\[0.5em]
&\leq\exp\!\left(
-(1-\delta)
\exp\Bigl(
n\delta \bigl(H(W)+H(W|XY)\bigr)
\Bigr)
\right)\label{e1_8}
\\[0.5em]
&\le \varepsilon_1,\label{e1_9}
\end{align}
where~\eqref{e1_1} follows from the total probability formula, \eqref{e1_2} holds because we independently and uniformly choose $w^n$ from $\calT^n_{\delta}(Q_W)$, \eqref{e1_3} holds because $\rmU_{\calT_{\delta}^{n}(Q_W)}$ is the uniform distribution over $\calT^n_{\delta}(Q_W)$, \eqref{e1_4} holds because $\mathbf{1}\{(\Bar{x}^n,\Bar{y}^n,w^n)\in\mathcal{T}_{\delta}^{(n)}(Q_{XYW})\} = 1$ only if $w^n \in {\calT_{\delta}}^{(n)}(Q_{W|XY}|(\Bar{x}^n,\Bar{y}^n))$, \eqref{e1_5} holds because $|\calT^n_{\delta}(Q_W)|\leq \exp(n(H(W)+\delta H(W)))$ and $|{\calT_{\delta}}^{(n)}(Q_{W|XY}|(\Bar{x}^n,\Bar{y}^n))|\geq (1-\delta)\exp(n(H(W|XY)-\delta H(W|XY)))$~\cite[Chapter 2]{Gamal_2011_Network-Information-Theory}, \eqref{e1_6} holds because $\sum_{(\Bar{x}^n,\Bar{y}^n)\in \mathcal{T}_{\delta}^{(n)}(P_{XY})}
P_{XY}^n(\Bar{x}^n,\Bar{y}^n) \leq 1$, \eqref{e1_7} follows from~\cite[Lemma 10.5.3]{Thomas_2006_Elements-of-information-theory}, i.e., for $0\leq a,b \leq 1$, $c>0$, $(1-ab)^c\leq1-a+\exp(-bc)$, \eqref{e1_8} holds because $M_0=\lfloor\exp\left(n\left(I(X,Y;W) + 2\delta_1\right)\right)\rfloor$, and~\eqref{e1_9} holds because $-(1-\delta)\exp(n\delta\left(H(W)+H(W|XY)\right)) $ tends to $-\infty$ as $n\to\infty$, which implies the exponential term is bounded by $\varepsilon_1$.\\
Combining~\eqref{source_joint} and~\eqref{e1_9}, we obtain
\begin{align}
\Pr\{\mathcal{E}_0\}\leq 2\varepsilon_1.\label{e_1_result}
\end{align}
Thus, the encoder $f_0$ can proceed successfully with high probability asymptotically.

Next, we analyze the distortion constraint for the source sequence $X^n$ since the analysis for $Y^n$ is analogous. Recall that the public codewords are generated from the typical set $\calT_{\delta}^{(n)}(Q_{W})$. Fix an index $s_0\in[M_0]$. Consider a source sequence $\bar{x}^n\in\calX^n$ and a public codeword $w^n(s_0)\in\calT_{\delta}^{(n)}(Q_{W})$.
Let $\hat{R}_1 := I(X;\tilde{X}|W) + \delta^{\prime}$. We define the sets
\begin{align}\label{defition_calA}
\mathcal{A}_{(\Bar{x}^n,w^n(s_0))} := \left\{ \tilde{x}^n \in \calT_{\delta}^{(n)}(Q_{\tilde{X}|W}|w^n(s_0)) : 
\begin{aligned} 
&Q_{\tilde{X}|XW}^n(\tilde{x}^n|\Bar{x}^n,w^n(s_0)) > \frac{\exp({n\hat{R}_1)}}{\lvert\calT_{\delta}^{(n)}(Q_{\tilde{X}|W}|w^n(s_0))\rvert} \\
& \quad \text{or} \quad
\frac{1}{n}\sum_{t=1}^{n}\Delta_1(\Bar{x}_t,\tilde{x}_{t}) > \bbE_{Q_{X\tilde X}}[\Delta_1(X,\tilde{X})] + \frac{\delta}{2} 
\end{aligned}
\right\},
\end{align}
and
\begin{align}
\mathcal{B}_{\Bar{x}^n} := \left\{ w^n \in \calT_{\delta}^{(n)}(Q_{W}): Q_{W|X}^n(w^n|\Bar{x}^n) \geq \frac{1}{\lvert\calT_{\delta}^{(n)}(Q_{W})\rvert} \right\}.\label{definition_B}
\end{align}
Since the coding scheme ensures that the pair $(\bar{x}^n,w^n(s_0))$ is jointly typical with high probability, it follows from the properties of typical sequences that $w^n(s_0)\in\mathcal{B}_{\Bar{x}^n}$ is satisfied.
\begin{lemma}\label{lemma:1}
Let $Q_{XW}$ be a joint probability distribution defined on finite alphabets $\mathcal{X}\times \mathcal{W}$ such that the mutual information satisfies $I(X;W) > 0$. For any $\delta>0$ sufficiently small, if a pair of sequences $(\Bar{x}^n,w^n)\in \calT_{\delta}^{(n)}(Q_{XW})$, we have $w^n\in\calB_{\bar{x}^n}$.
\end{lemma}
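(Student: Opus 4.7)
The plan is to verify both defining conditions of $\calB_{\bar x^n}$ directly from the joint typicality hypothesis. The first condition, $w^n\in\calT_\delta^{(n)}(Q_W)$, is immediate: summing the componentwise bound $|\hat T_{\bar x^n w^n}(x,w)-Q_{XW}(x,w)|\le \delta\, Q_{XW}(x,w)$ over $x$ and using the triangle inequality yields $|\hat T_{w^n}(w)-Q_W(w)|\le \delta\, Q_W(w)$, so $w^n\in\calT_\delta^{(n)}(Q_W)$. The substantive part of the proof is therefore the probability bound
\begin{align*}
Q_{W|X}^n(w^n\mid\bar x^n)\ \ge\ \frac{1}{\bigl|\calT_\delta^{(n)}(Q_W)\bigr|}.
\end{align*}

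To lower-bound $Q_{W|X}^n(w^n\mid\bar x^n)$, I would rewrite its logarithm through the joint empirical distribution as
\begin{align*}
\log Q_{W|X}^n(w^n\mid\bar x^n)\ =\ -n\sum_{(x,w)}\hat T_{\bar x^n w^n}(x,w)\,\log\tfrac{1}{Q_{W|X}(w\mid x)},
\end{align*}
and apply $\hat T_{\bar x^n w^n}(x,w)\le (1+\delta)\,Q_{XW}(x,w)$, together with the fact that pairs $(x,w)$ with $Q_{XW}(x,w)=0$ cannot appear in a $\delta$-typical sequence, so the terms with $Q_{W|X}(w\mid x)=0$ never arise. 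Since $\log(1/Q_{W|X}(w\mid x))\ge 0$ on the support, this yields
\begin{align*}
Q_{W|X}^n(w^n\mid\bar x^n)\ \ge\ \exp\!\bigl(-n(1+\delta)H(W\mid X)\bigr).
\end{align*}
On the other side, the standard lower bound on the size of a typical set~\cite[Section~2.4]{Gamal_2011_Network-Information-Theory} gives $\bigl|\calT_\delta^{(n)}(Q_W)\bigr|\ge (1-\varepsilon_n)\exp\!\bigl(n(1-\delta)H(W)\bigr)$ for some $\varepsilon_n\to 0$.

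Combining these two estimates reduces the claim to the scalar inequality
\begin{align*}
(1-\delta)H(W)-(1+\delta)H(W\mid X)\ \ge\ \tfrac{1}{n}\log\tfrac{1}{1-\varepsilon_n},
\end{align*}
equivalently $I(X;W)-\delta\bigl(H(W)+H(W\mid X)\bigr)\ \ge\ \tfrac{1}{n}\log\tfrac{1}{1-\varepsilon_n}$. Since $I(X;W)>0$ by hypothesis, the threshold $\delta^\star:=I(X;W)/\bigl(H(W)+H(W\mid X)\bigr)$ is strictly positive; for any $\delta<\delta^\star$ the left-hand side is a positive constant while the right-hand side tends to $0$ as $n\to\infty$, so the inequality holds for all sufficiently large $n$, whence $w^n\in\calB_{\bar x^n}$.

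The main obstacle, and really the point of the lemma, is the simultaneous bookkeeping of two $\delta$-slacks: one slack comes from replacing the empirical distribution by $Q_{XW}$ in the entropy expression, and the other from the cardinality lower bound on $\calT_\delta^{(n)}(Q_W)$. The hypothesis $I(X;W)>0$ is precisely what is needed to absorb both slacks into the positive gap $H(W)-H(W\mid X)$; this explains why the lemma cannot be expected to hold without that hypothesis, and why the admissible range of $\delta$ shrinks with $I(X;W)$.
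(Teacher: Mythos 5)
Your proof is correct and follows essentially the same approach as the paper's: lower-bound $Q_{W|X}^n(w^n\mid\bar x^n)$ by $\exp\!\bigl(-n(1+\delta)H(W\mid X)\bigr)$ via the joint empirical distribution, combine with the typical-set cardinality lower bound on $|\calT_\delta^{(n)}(Q_W)|$, and observe that the resulting gap is $I(X;W)$ minus terms of order $\delta$, hence positive once $\delta$ is small relative to $I(X;W)>0$. The only cosmetic differences are that you explicitly check the marginal typicality condition $w^n\in\calT_\delta^{(n)}(Q_W)$ (left implicit in the paper) and package the sub-exponential prefactor as $1-\varepsilon_n$ rather than $1-\delta$.
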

\begin{proof}
Let $\delta_{Q_W}:=\delta H(W) +  \frac{1}{n}\log\frac{1}{(1-\delta)}$. Invoking the property of the 
$\delta$-typical set $\calT_{\delta}^{(n)}(Q_{W})$ \cite[Chapter 2]{Gamal_2011_Network-Information-Theory}, we obtain
\begin{align}
|\calT_{\delta}^{(n)}(Q_{W})|\geq (1-\delta)\,
\exp\left(
n\bigl(H(W)-\delta H(W)\bigr)
\right).\label{typical_set_upper}
\end{align}
Thus, we have
\begin{align}
\frac{1}{n} \log \frac{1}{|\calT_{\delta}^{(n)}(Q_{W})|} 
&\leq \frac{1}{n}
\log\biggl(
\frac{
1
}{
(1-\delta)\,
\exp\!\left(
n\bigl(H(W)-\delta H(W)\bigr)
\right)
}
\biggr)\label{upper_bound_calB_1}\\ 
&=-H(W)+\delta H(W) +  \frac{1}{n}\log\frac{1}{(1-\delta)}\\
&=-H(W) + \delta_{Q_W},\label{B_right}
\end{align}
where \eqref{upper_bound_calB_1} follows from \eqref{typical_set_upper}. Note that $\delta_{Q_W}\to0$ as $\delta\to 0$. 
Given $(\Bar{x}^n,w^n)\in \calT_{\delta}^{(n)}(Q_{XW})$, we obtain
\begin{align}
\frac{1}{n} \log Q_{W|X}^n(w^n|\Bar{x}^n) 
&= \sum_{(a,b)\in \calX\times\calW} \hat{T}_{\Bar{x}^n,w^n}(a,b) \log Q_{W|X}(b|a)\label{left_B_1} \\
&\geq \sum_{(a,b)\in \calX\times\calW} (1+\delta) Q_{XW}(a,b) \log Q_{W|X}(b|a)\label{left_B_2}\\
&= (1+\delta){\sum_{(a,b)\in \calX\times\calW} Q_{XW}(a,b) \log Q_{W|X}(b|a)}\label{left_B_3}\\
&= -(1+\delta) H(W|X),\label{left_B_4}
\end{align}
where \eqref{left_B_1} is due to the joint empirical distribution $\hat{T}_{\Bar{x}^n,w^n}$, \eqref{left_B_2} follows from the definition of $\calT_{\delta}^{(n)}(Q_{XW})$\cite[Chapter 2]{Gamal_2011_Network-Information-Theory}, and \eqref{left_B_4} uses the definition of $H(W|X)$.
Since $I(X;W)>0$ and $\delta\to0$, we have
\begin{align}
\frac{1}{n} \log Q_{W|X}^n(w^n|\Bar{x}^n) - \frac{1}{n} \log \frac{1}{|\calT_{\delta}^{(n)}(Q_{W})|} 
&\geq  -(1+\delta) H(W|X) + H(W) -\delta_{Q_W}\label{diff_B}\\
&= I(X;W) - \delta H(W|X) -\delta_{Q_W}\\
&>0,
\end{align}
where \eqref{diff_B} follows from \eqref{B_right} and \eqref{left_B_4}.
\end{proof}
Fix a source sequence $x^n\in\calX^n$ and public codeword $w^n(s_0)\in\calT_{\delta}^{(n)}(Q_{W})$. Recall that $\Bar{x}^n=\pi_{-k}(x^n)$ and that the private codewords $\{\tilde X^n(s_0,s_1)\}_{s_1\in[M_1]}$ are generated independently from a uniform distribution over $\calT_{\delta}^{(n)}(Q_{\tilde{X}|W}|w^n(s_0))$. Define the event
\begin{align}
\mathcal{E}_1 :=\Bigl\{\frac{1}{n}\sum_{t\in[n]}\Delta_1(\Bar{x}_t,\tilde X_t(s_0,s_1))>\bbE_{Q_{X\tilde X}}[\Delta_1(X,\tilde X)]+\tfrac{\delta}{2},\,\forall~{s_1}\in [M_1]\Bigr\}.
\end{align} 
Therefore, we obtain
\begin{align}
\Pr\{\mathcal{E}_1\}
&= \prod_{s_1\in [M_1]}\Pr_{\rmU_{\calT_{\delta}^{(n)}(Q_{\tilde{X}|W}|w^n(s_0))}}\Bigl\{\tfrac{1}{n}\sum_{t\in[n]}\Delta_1(\Bar{x}_t,\tilde X_t(s_0,s_1))
>\bbE_{Q_{X\tilde X}}[\Delta_1(X,\tilde X)]+\tfrac{\delta}{2}\;\Bigm|\bar{x}^n,w^n(s_0)\Bigr\}\label{e0_1}\\
&\leq \prod_{s_1\in [M_1]}\sum_{\tilde x^n\in\mathcal{A}_{(\Bar{x}^n,w^n(s_0))}}\rmU_{\calT_{\delta}^{(n)}(Q_{\tilde{X}|W}|w^n(s_0))}(\tilde x^n) \label{e0_2}\\
&=\biggl(\sum_{\tilde x^n\in\mathcal{A}_{(\Bar{x}^n,w^n(s_0))}}\rmU_{\calT_{\delta}^{(n)}(Q_{\tilde{X}|W}|w^n(s_0))}(\tilde x^n)\biggr)^{M_1}\label{e0_3}\\
&= \Bigl(1-\sum_{\tilde x^n\in \calT_{\delta}^{(n)}(Q_{\tilde{X}|W}|w^n(s_0))\setminus\mathcal{A}_{(\Bar{x}^n,w^n(s_0))}}\rmU_{\calT_{\delta}^{(n)}(Q_{\tilde{X}|W}|w^n(s_0))}(\tilde x^n)\Bigr)^{M_1}\label{e0_4}\\
&= \Bigl(1-\sum_{\tilde x^n\in\calT_{\delta}^{(n)}(Q_{\tilde{X}|W}|w^n(s_0))\setminus\mathcal{A}_{(\Bar{x}^n,w^n(s_0))}}
\tfrac1{|\calT_{\delta}^{(n)}(Q_{\tilde{X}|W}|w^n(s_0))|}\Bigr)^{M_1}\label{e0_5}\\
&\le \Bigl(1-\exp({-n\hat R_1})\sum_{\tilde x^n\in\calT_{\delta}^{(n)}(Q_{\tilde{X}|W}|w^n(s_0))\setminus\mathcal{A}_{(\Bar{x}^n,w^n(s_0))}}
Q_{\tilde X|XW}^n(\tilde x^n|\Bar{x}^n,w^n(s_0))\Bigr)^{M_1}\label{e0_6}\\ 
&\le 1-\sum_{\tilde x^n\in\calT_{\delta}^{(n)}(Q_{\tilde{X}|W}|w^n(s_0))\setminus\mathcal{A}_{(\Bar{x}^n,w^n(s_0))}}
Q_{\tilde X|XW}^n(\tilde x^n|\Bar{x}^n,w^n(s_0))
+\exp\bigl(-M_{1} \exp(-n\hat{R}_1)\bigr)\label{e0_7}\\ 
&= \sum_{\tilde x^n\in(\calT_{\delta}^{(n)}(Q_{\tilde{X}|W}|w^n(s_0)))^{\rmc}\cup\mathcal{A}_{(\Bar{x}^n,w^n(s_0))}}
Q_{\tilde X|XW}^n(\tilde x^n|\Bar{x}^n,w^n(s_0))
+\exp\bigl(-M_{1} \exp(-n\hat{R}_1)\bigr),\label{e0_8}
\end{align}
where \eqref{e0_1} holds because the codewords for different indices are generated independently, 
\eqref{e0_2} follows from the definition of $\mathcal{A}_{(\Bar{x}^n,w^n(s_0))}$ in \eqref{defition_calA}, which implies that any sequence $\tilde{x}^n$ satisfying the distortion condition in $\mathcal{E}_1$ must belong to $\mathcal{A}_{(\Bar{x}^n,w^n(s_0))}$. \eqref{e0_3} holds because the term inside the product is independent of the index $s_1$, \eqref{e0_5} holds because $\rmU_{\calT_{\delta}^{(n)}(Q_{\tilde{X}|W}|w^n(s_0))}(\tilde{x}^n)$ is the uniform distribution over $\calT_{\delta}^{(n)}(Q_{\tilde X|W}|w^n(s_0))$, \eqref{e0_6} follows by substituting the bound $\frac{1}{|\calT_{\delta}^{(n)}(Q_{\tilde{X}|W}|w^n(s_0))|}\geq Q_{\tilde X|XW}^n(\tilde x^n|\bar{x}^n,w^n(s_0))\exp(-n\hat{R}_1)$ which holds for any $\tilde{x}^n \notin \mathcal{A}_{(\bar{x}^n,w^n(s_0))}$, \eqref{e0_7} follows from~\cite[Lemma 10.5.3]{Thomas_2006_Elements-of-information-theory}, i.e., for $0\leq a,b \leq 1$, $c>0$, $(1-ab)^c\leq1-a+\exp(-bc)$, by setting $a=\sum_{\tilde x^n\in\calT_{\delta}^{(n)}(Q_{\tilde{X}|W}|w^n(s_0))\setminus\mathcal{A}_{(\Bar{x}^n,w^n(s_0))}}
Q_{\tilde X|XW}^n(\tilde x^n|\Bar{x}^n,w^n(s_0))$, $b=\exp(-n\hat{R}_1)$, and $c=M_1$, \eqref{e0_8} holds because $ (\calT_{\delta}^{(n)}(Q_{\tilde{X}|W}|w^n(s_0))\setminus\mathcal{A}_{(\Bar{x}^n,w^n(s_0))})^{\rmc}= (\calT_{\delta}^{(n)}(Q_{\tilde{X}|W}|w^n(s_0)))^{\rmc}\cup\mathcal{A}_{(\Bar{x}^n,w^n(s_0))}$.

Define
\begin{align}
{\mathcal{A}_1}^{(n)} := \left\{ (\Bar{x}^n, w^n(s_0), \tilde{x}^n) :(\Bar{x}^n,w^n(s_0))\in \calT_{\delta}^{(n)}(Q_{XW}), \tilde{x}^n \in (\calT_{\delta}^{(n)}(Q_{\tilde{X}|W}|w^n(s_0)))^{\rmc} \cup \mathcal{A}_{(\Bar{x}^n,w^n(s_0))} \right\}.
\end{align}
We can obtain
\begin{align}
&\Pr\biggl\{ \frac{1}{n} \sum_{t\in[n]} \Delta_1(\Bar{X}_t, \tilde{X}_t(s_0,s_1)) > \bbE_{Q_{X\tilde X}}[\Delta_1(X, \tilde{X})] + \frac{\delta}{2} \biggr\}\nonumber \\
&=\Pr\biggl\{ \frac{1}{n} \sum_{t\in[n]} \Delta_1(\Bar{X}_t, \tilde{X}_t(s_0,s_1)) > \bbE_{Q_{X\tilde X}}[\Delta_1(X, \tilde{X})] + \frac{\delta}{2}\,,(\Bar{X}^n,W^n(s_0))\in \calT_{\delta}^{(n)}(Q_{XW})\biggr\}\nonumber\\
&\quad+\Pr\biggl\{ \frac{1}{n} \sum_{t\in[n]} \Delta_1(\Bar{X}_t, \tilde{X}_t(s_0,s_1)) > \bbE_{Q_{X\tilde X}}[\Delta_1(X, \tilde{X})] + \frac{\delta}{2}\,, (\Bar{X}^n,W^n(s_0))\notin \calT_{\delta}^{(n)}(Q_{XW})\biggr\}\label{total_e0_1}\\
&\leq \Pr\biggl\{ \frac{1}{n} \sum_{t\in[n]} \Delta_1(\Bar{X}_t, \tilde{X}_t(s_0,s_1)) > \bbE_{Q_{X\tilde X}}[\Delta_1(X, \tilde{X})] + \frac{\delta}{2},\,(\Bar{X}^n,W^n(s_0))\in \calT_{\delta}^{(n)}(Q_{XW})\biggr\}\nn\\
&\quad+\Pr\biggl\{(\Bar{X}^n,W^n(s_0))\notin \calT_{\delta}^{(n)}(Q_{XW})\biggr\}\label{total_e0_2}\\
&\leq \sum_{(\Bar{x}^n,w^n(s_0))\in \calT_{\delta}^{(n)}(Q_{XW})}P_{X}^n(\Bar{x}^n)\rmU_{\calT_{\delta}^{(n)}(Q_{W})}(w^n(s_0))\Pr\{\mathcal{E}_1\}+2\varepsilon_1\label{total_e0_3}\\
&=\sum_{(\Bar{x}^n,w^n(s_0))\in \calT_{\delta}^{(n)}(Q_{XW})}P_{X}^n(\Bar{x}^n)\frac{1}{\lvert\calT_{\delta}^{(n)}(Q_{W})\rvert}\Pr\{\mathcal{E}_1\}+2\varepsilon_1\label{total_e0_4}\\
&\leq\sum_{(\Bar{x}^n,w^n(s_0))\in \calT_{\delta}^{(n)}(Q_{XW})}P_{X}^n(\Bar{x}^n)Q_{W|X}^n(w^n(s_0)|\bar{x}^n)\Pr\{\mathcal{E}_1\}+2\varepsilon_1\label{total_e0_5}\\
&\leq \sum_{(\Bar{x}^n,w^n(s_0))\in \calT_{\delta}^{(n)}(Q_{XW})}P_{X}^n(\Bar{x}^n)Q_{W|X}^n(w^n(s_0)|\Bar{x}^n)\sum_{\tilde x^n\in(\calT_{\delta}^{(n)}(Q_{\tilde{X}|W}|w^n(s_0)))^{\rmc}\cup\mathcal{A}_{(\Bar{x}^n,w^n(s_0))}}
Q_{\tilde X|XW}^n(\tilde x^n|\Bar{x}^n,w^n(s_0))\nn\\ 
&\quad+\exp\bigl( -M_1 \exp(-n \hat{R}_1)\bigr)+2\varepsilon_1\label{total_e0_6}\\
&= \sum_{(\Bar{x}^n,w^n(s_0),\tilde x^n) \in {\mathcal{A}_1}^{(n)}} Q_{XW\tilde X}^n(\Bar{x}^n,w^n(s_0),\tilde x^n) + \exp\bigl( -M_1 \exp(-n \hat{R}_1)\bigr) +2\varepsilon_1\label{total_e0_7}\\
&=\Pr_{Q_{XW\tilde{X}}^n}\left\{ (\Bar{X}^n, W^n(s_0),\check{X}^n) \in {\mathcal{A}}_1^{(n)}\right\}+ \exp\bigl( -M_1 \exp(-n \hat{R}_1)\bigr) +2\varepsilon_1\label{total_e0_8}.
\end{align}
Note that for $t\in[n]$, $(\Bar{X}_t, W_t(s_0),\check{X}_t)$ are independent and distributed according to \( Q_{XW\tilde{X}} \). \eqref{total_e0_1} follows from the law of total probability, \eqref{total_e0_3} holds because the codewords $w^n(s_0)$ are chosen independently and uniformly from $\calT_{\delta}^{(n)}(Q_W)$, and the second term is bounded by $2\varepsilon_1$ as shown in \eqref{e_1_result}, \eqref{total_e0_4} holds for the definition of $U_{\calT_{\delta}^{(n)}(Q_W)}(w^n(s_0))$, \eqref{total_e0_5} follows from Lemma~\ref{lemma:1}, which implies $\frac{1}{|\calT_{\delta}^{(n)}(Q_W)|}\leq Q_{W|X}^n(w^n(s_0)|\bar{x}^n)$ for typical pairs, \eqref{total_e0_6} follows from \eqref{e0_8}, \eqref{total_e0_7} follows from the definition of ${\mathcal{A}_1}^{(n)}$. 

According to the definition of ${\mathcal{A}_1}^{(n)}$, it follows that
\begin{align}
&\Pr_{Q_{XW\tilde{X}}^n}\big\{(\Bar{X}^n, W^n(s_0),\check{X}^n) \in {\mathcal{A}}_1^{(n)}\big\}\nn\\
&\leq \Pr_{Q_{\tilde{X}}^n}\left\{\check{X}^n \notin \calT_{\delta}^{(n)}(Q_{\tilde{X}|W}|W^n(s_0))\right\} + \Pr_{Q_{X\tilde{X}}^n}\biggl\{ \Bigl|\frac{1}{n} \sum_{t\in[n]} \Delta_1(\Bar{X}_t, \check{X}_t) - \bbE_{Q_{X\tilde X}}[\Delta_1(X, \tilde{X})] \Bigr|> \frac{\delta}{2} \biggr\}\nonumber \\
&\quad+ \Pr_{Q_{XW\tilde{X}}^n} \biggl\{\frac{1}{n}\log Q_{\tilde{X}|XW}^n(\check{X}^n|\Bar{X}^n,W^n(s_0)) > \frac{1}{n} \log \frac{\exp({n\hat{R}_1})}{|\calT_{\delta}^{(n)}(Q_{\tilde{X}|W}|W^n(s_0))|} \biggr\}.
\end{align}
Note that $\check{X}^n \in \calT_{\delta}^{(n)}(Q_{\tilde{X}|W}|W^n(s_0))$ with high probability for sufficiently large $n$, since the tuples $(\Bar{X}_t, W_t(s_0),\check{X}_t)$ for $t \in [n]$ are i.i.d. generated according to \( Q_{XW\tilde{X}} \). That is, there exists $\varepsilon_1\to0$ such that
\begin{align}\label{A_1}
\Pr_{Q_{\tilde{X}}^n}\{\check{X}^n \notin \calT_{\delta}^{(n)}(Q_{\tilde{X}|W}|W^n(s_0))\} \leq \varepsilon_1.
\end{align}
Moreover, assuming finite variances $\mathrm{Var}(\Delta_1(X, \tilde{X}))$ and $\mathrm{Var}(\log Q_{\tilde{X}|XW})$, Chebyshev’s inequality implies that for sufficiently large $n$:
\begin{align}\label{A_2}
\Pr_{Q_{X\tilde{X}}^n}\biggl\{ \Bigl|\frac{1}{n} \sum_{t\in[n]} \Delta_1(\Bar{X}_t, \check{X}_t) - \bbE_{Q_{X\tilde X}}[\Delta_1(X, \tilde{X})] \Bigr|> \frac{\delta}{2} \biggr\} \leq \frac{\mathrm{Var}(\Delta_1(X, \tilde{X}))}{n(\frac{\delta}{2})^2} \leq \varepsilon_1,
\end{align}
and
\begin{align}
\Pr_{Q_{XW\tilde{X}}^n} \left\{ \Bigl|\frac{1}{n}\log Q_{\tilde{X}|XW}^n(\check{X}^n |\Bar{X}^n,W^n(s_0)) + H(\tilde{X}|XW) \Bigr|\geq \delta\right\} \leq \frac{\mathrm{Var}(\log Q_{\tilde{X}|XW})}{n\delta^2}\leq \varepsilon_1.\label{Q_tilX|XW}
\end{align}
Using the property of the conditional
$\delta$-typical set $\calT_{\delta}^{(n)}(Q_{\tilde{X}|W}|W^n(s_0))$ \cite[Chapter 2]{Gamal_2011_Network-Information-Theory}, we have
\begin{align}
|\calT_{\delta}^{(n)}(Q_{\tilde{X}|W}|W^n(s_0))| 
&\leq \exp\bigl({n(1+\delta)H(\tilde X|W)}\bigr).\label{typical_set_Q_X|W}
\end{align}
Thus, we obtain
\begin{align}
\frac{1}{n} \log \frac{\exp({n\hat{R}_1})}{|\calT_{\delta}^{(n)}(Q_{\tilde{X}|W}|W^n(s_0))|}
&\geq \frac{1}{n} \log \frac{\exp({n\hat{R}_1})}{\exp\bigl({n(1+\delta)H(\tilde X|W)}\bigr)}\label{typical_set_Q_X|W_1}\\
&= I(X; \tilde{X}|W) + \delta^{\prime} - (1 + \delta) H(\tilde{X}|W)\label{typical_set_Q_X|W_2}\\
&= -H(\tilde{X}|XW) + \delta^{\prime} - \delta H(\tilde{X}|W)\label{typical_set_Q_X|W_3}\\
&= -H(\tilde{X}|XW) + \delta,\label{typical_set_Q_X|W_4}
\end{align}
where \eqref{typical_set_Q_X|W_1} follows from \eqref{typical_set_Q_X|W}, \eqref{typical_set_Q_X|W_2} follows from $\hat{R}_1 := I(X;\tilde{X}|W) + \delta^{\prime}$, \eqref{typical_set_Q_X|W_4} follows from $\delta^{\prime}=\delta(H(\tilde{X}|W)+1)$. Combining \eqref{Q_tilX|XW} and \eqref{typical_set_Q_X|W_4} yields
\begin{align}\label{A_3}
\Pr_{Q_{XW\tilde{X}}^n}\biggl\{\frac{1}{n}\log Q_{\tilde{X}|XW}^n(\check{X}^n |\Bar{X}^n,W^n(s_0)) > \frac{1}{n} \log \frac{\exp({n\hat{R}_1})}{|\calT_{\delta}^{(n)}(Q_{\tilde{X}|W}|W^n(s_0))|} \biggr\} \leq \varepsilon_1.
\end{align}
Finally, combining \eqref{A_1}, \eqref{A_2} and \eqref{A_3}, we have
\begin{align}
\Pr_{Q_{XW\tilde{X}}^n}\big\{(\bar{X}^n, W^n(s_0),\check{X}^n) \in {\mathcal{A}}_1^{(n)}\big\}
&\leq 3\varepsilon_1, \label{result_dist_1}
\end{align}
as \(n \to \infty\). 

Recall that $M_1 := \lfloor \exp(n(I(X;\tilde{X}|W) + 2\delta^{\prime})) \rfloor$. Observe that for sufficiently large $n$, 
\begin{align}
\exp\bigl( -M_1 \exp(-n \hat{R}_1) \bigr) &\leq \exp\Bigl(-\exp\bigl({n( I(X; \tilde{X}|W) + 2\delta' )}\bigr) \exp\bigl({-n(I(X;\tilde{X}|W) + \delta')}\bigr)\Bigr)\\
&=\exp\bigl(-\exp({n\delta')}\bigr)\\
&\leq \varepsilon_1,\label{result_dist_2}
\end{align}
where $\delta'\to0$ as $\delta\to0$.
Consequently, for any \(\delta > 0\) and setting \(\varepsilon_2 =6\varepsilon_1\), it follows that for sufficiently large \(n\),
\begin{align}
\frac{1}{n} \log M_1 \leq I(X; \tilde{X}|W) + 2\delta (H(\tilde{X}|W) + 1),\label{M1_code}
\end{align}
and 
\begin{align}\label{result_distortion}
\Pr\Bigl\{ \frac{1}{n} \sum_{t=1}^{n} \Delta_1(\Bar{X}_t, \tilde{X}_t) > \bbE_{Q_{X\tilde X}}[\Delta_1(X, \tilde{X})] + \frac{\delta}{2} \Bigr\} \leq \varepsilon_2,
\end{align}
where \eqref{result_distortion} follows from \eqref{total_e0_8}, \eqref{result_dist_1} and \eqref{result_dist_2}.
Define
\begin{align}\label{definition_V}
V := 
\begin{cases} 
1, & \frac{1}{n} \sum_{t=1}^{n} \Delta_1(\Bar{X}_t, \tilde{X}_t) > \bbE_{Q_{X\tilde X}}[\Delta_1(X, \tilde{X})] + \frac{\delta}{2}, \\
0, & \text{otherwise},
\end{cases}
\end{align}
and let \( Z_t := \max_{\tilde{x} \in \tilde{X}} \Delta_1(\Bar{X}_t, \tilde{x})  \) for \( t \in [n] \). We have
\begin{align}
\frac{1}{n} \sum_{t\in[n]} \bbE_{Q_{X_t\tilde X_t}}[\Delta_1(\Bar{X}_t, \tilde{X}_t)]
&= \frac{1}{n} \sum_{t\in[n]} \Pr\{V = 0\} \bbE_{Q_{X_t\tilde X_t}}[\Delta_1(\Bar{X}_t, \tilde{X}_t) \mid V = 0] \nn\\
&\quad+\frac{1}{n} \sum_{t\in[n]} \Pr\{V = 1\} \bbE_{Q_{X_t\tilde X_t}}[\Delta_1(\Bar{X}_t, \tilde{X}_t) \mid V = 1]\label{expectation_distortion_1}\\
&\leq \frac{1}{n} \sum_{t\in[n]} \bbE_{Q_{X_t\tilde X_t}}[\Delta_1(\Bar{X}_t, \tilde{X}_t) \mid V = 0] + \frac{1}{n} \sum_{t\in[n]} \bbE[V \Delta_1(\Bar{X}_t, \tilde{X}_t)]\label{expectation_distortion_2}\\
&\leq \bbE_{Q_{X\tilde X}}[\Delta_1(X, \tilde{X})] + \frac{\delta}{2} + \frac{1}{n} \sum_{t\in[n]} \bbE[V Z_t],\label{expectation_distortion_3}
\end{align}
where \eqref{expectation_distortion_1} follows from the law of total expectation, \eqref{expectation_distortion_2} follows from \eqref{result_distortion}, \eqref{expectation_distortion_3} follows from \eqref{definition_V} and the definition of $Z_t$.
For $t\in[n]$, we obtain
\begin{align}
\bbE[VZ_t] 
&= \Pr \Bigl\{ Z_t \leq \frac{1}{\sqrt{\varepsilon_2}} \Bigr\}\bbE \Bigl[ VZ_t \mid Z_t \leq \frac{1}{\sqrt{\varepsilon_2}} \Bigr] 
+ \Pr\Bigl\{ Z_t > \frac{1}{\sqrt{\varepsilon_2}} \Bigr\}\bbE \Bigl[ VZ_t \mid Z_t > \frac{1}{\sqrt{\varepsilon_2}} \Bigr]\label{VZ_1}\\
&\leq \Pr\Bigl\{ Z_t \leq \frac{1}{\sqrt{\varepsilon_2}} \Bigr\}\bbE \Bigl[ V \frac{1}{\sqrt{\varepsilon_2}} \mid Z_t \leq \frac{1}{\sqrt{\varepsilon_2}} \Bigr] 
+ \Pr\Bigl\{ Z_t > \frac{1}{\sqrt{\varepsilon_2}} \Bigr\}\bbE_{P_X} \Bigl[ Z_t \mid Z_t > \frac{1}{\sqrt{\varepsilon_2}} \Bigr]\label{VZ_2}\\
&\leq \frac{1}{\sqrt{\varepsilon_2}} \bbE[V] + \Pr\Bigl\{ Z_t > \frac{1}{\sqrt{\varepsilon_2}} \Bigr\}\bbE_{P_X} \Bigl[ Z_t \mid Z_t > \frac{1}{\sqrt{\varepsilon_2}} \Bigr]\label{VZ_3}\\
&\leq \sqrt{\varepsilon_2} + \Pr\Bigl\{ Z_t > \frac{1}{\sqrt{\varepsilon_2}} \Bigr\}\bbE_{P_X} \Bigl[ Z_t \mid Z_t > \frac{1}{\sqrt{\varepsilon_2}} \Bigr],\label{VZ_4}
\end{align}
where \eqref{VZ_1} follows from the law of total expectation, \eqref{VZ_2} holds because $Z_t\leq \frac{1}{\sqrt{\varepsilon_2}}$ and $V\leq 1$, \eqref{VZ_3} follows from the law of total expectation, \eqref{VZ_4} follows from \eqref{result_distortion}. 
Since for $t\in[n]$, we have
\begin{align}
\bbE_{P_X}[Z_t] = \bbE_{P_X} \Bigl[ \max_{\tilde{x} \in \mathcal{X}} \Delta_1(X, \tilde{x}) \Bigr] < \infty.
\end{align}
It follows by the dominated convergence theorem \cite[Theorem 16.4]{billingsley1995probability} that \(\Pr\{Z_t > z\} \bbE_{P_X}[Z_t \mid Z_t > z] \to 0\) as \(z \to \infty\). Therefore, by choosing a sufficiently small \(\varepsilon_2\), we can ensure that 
\begin{align}\label{distortion_3}
\bbE[VZ_t] \leq \frac{\delta}{2},\,t\in[n].
\end{align}
Combining \eqref{expectation_distortion_3} and \eqref{distortion_3}, and choosing $\delta\leq \frac{\varepsilon}{2(H(\tilde{X}|W)+1)}$, we have
\begin{align}
\frac{1}{n} \sum_{t\in[n]} \bbE_{Q_{X_t\tilde{X}_t}}[\Delta_1(\Bar{X}_t, \tilde{X}_t)] 
&\leq \bbE_{Q_{X\tilde{X}}}[\Delta_1(X, \tilde{X})] + \delta\\
&\leq \bbE_{Q_{X\tilde{X}}}[\Delta_1(X, \tilde{X})] + \frac{\varepsilon}{2(H(\tilde{X}|W)+1)}\\
&\leq \bbE_{Q_{X\tilde{X}}}[\Delta_1(X, \tilde{X})] + \frac{\varepsilon}{2}\\
&\leq D_1 - \varepsilon + \frac{\varepsilon}{2}\label{D1-e/2}\\
&= D_1 - \frac{\varepsilon}{2}\label{D1-e/2_1},
\end{align}
where \eqref{D1-e/2} follows from \eqref{assump1_2}.
Recalling that $\Bar{X}^n=\pi_{-k}(X^n)$ and the definition of the reconstruction sequence $\hat{X}^n = \pi_k{(\tilde X^n)}$ for $k\in[0:n-1]$, we obtain
\begin{align}
\frac{1}{n}\sum_{t\in[n]} \bbE_{Q_{X_t\hat{X}_t}}[\Delta_1(X_t,\hat{X}_{t})] 
& = \frac{1}{n}\sum_{t\in[n]} \Bigl(\sum_{k=0}^{n-1} \frac{1}{n}\bbE_{Q_{X_t\hat{X}_t}}[\Delta_1(X_t,\hat{X}_{t})|K = k] \Bigr)\label{distortion_measure_1}\\
& = \frac{1}{n}\sum_{t\in[n]}\bbE_{Q_{X_t\hat{X}_t}}[\Delta_1(X_t,\hat{X}_{t})|K = k]\label{distortion_measure_2}\\
&=\frac{1}{n}\sum_{t\in[n]} \bbE_{Q_{X_t\hat{X}_t}}[\Delta_1(X_t,\tilde{X}_{\theta_k^{(n)}(t)})|K = k]\label{distortion_measure_3}\\
&=\frac{1}{n}\sum_{t\in[n]} \bbE_{Q_{X_t\hat{X}_t}}[\Delta_1(\Bar{X}_{\theta_k^{(n)}(t)},\tilde{X}_{\theta_k^{(n)}(t)})|K = k]\label{distortion_measure_4} \\
&=\frac{1}{n}\sum_{i=1}^n \bbE_{Q_{X_i\tilde{X}_i}}[\Delta_1(\Bar{X}_{i},\tilde{X}_{i})|K = k] \label{distortion_measure_5}\\
&=\frac{1}{n}\sum_{i=1}^n\Bigl( \sum_{k=0}^{n-1}\frac{1}{n}\bbE_{Q_{X_i\tilde {X}_i}}[\Delta_1(\Bar{X}_{i},\tilde{X}_{i})|K = k] \Bigr)\label{distortion_measure_6}\\
&=\frac{1}{n}\sum_{t\in[n]} \bbE_{Q_{X_t\tilde {X}_t}}[\Delta_1(\Bar{X}_t,\tilde{X}_{t})]\label{distortion_measure_7}\\
&\le D_1 - \frac{\varepsilon}{2}\label{distortion_measure_8},
\end{align}
where \eqref{distortion_measure_1} follows from the law of total expectation, \eqref{distortion_measure_2} holds because the term $\sum_{t\in[n]}\bbE_{Q_{X_t\hat{X}_t}}[\Delta_1(X_t,\hat{X}_{t})|K = k]$ is independent of $k$ due to the stationarity of the source and the symmetry of the circular shift operation, \eqref{distortion_measure_3} follows from the construction $\hat{X}_{t}=\tilde{X}_{\theta_k^{(n)}(t)}$, \eqref{distortion_measure_4} follows from the relation $X_t=\bar{X}_{\theta_k^{(n)}(t)}$, \eqref{distortion_measure_5} holds because the formula is independent of the value of $k$, \eqref{distortion_measure_7} follows again from the law of total expectation, and \eqref{distortion_measure_8} follows from \eqref{D1-e/2_1}.\\
\subsubsection{Perception Analysis}
Again, we analyze the perception
constraint for the source sequence $X^n$ since the analysis
for $Y^n$ is analogous. Since the shift index $K$ is uniformly distributed on $[0:n-1]$, for any $t\in[n]$, $t' \in [n]$, we have
\begin{align}
Q_{\hat{X}_t|\tilde X_{t'}} = \frac{1}{n}.
\end{align}
Consequently, for any $t \in [n]$, the distribution of $\hat{X}_{t}$ satisfies
\begin{align}
Q_{\hat{X}_{t}} &= \sum_{t'=1}^n Q_{\hat{X}_{t}|\tilde{X}_{t'}}Q_{\tilde{X}_{t'}}\\ 
&= \frac{1}{n}\sum_{t'=1}^n Q_{\tilde{X}_{t'}}.\label{checkX} 
\end{align}
Now fix any public codeword $w^n \in \calT_{\delta}^{(n)}(Q_W)$. Since all private codewords are taken from the set $\calT_{\delta}^{(n)}(Q_{\tilde{X}|W}|w^n)$, i.e.,
\begin{align}
\Pr\{\tilde{X}^n \in \calT_{\delta}^{(n)}(Q_{\tilde{X}|W}|w^n)\} &= 1.\label{tilde X in Q_XW}
\end{align}
By the definition of empirical distribution, for a sequence $\tilde x^n\in \tilde{\calX}^n$ and any $a\in\cal\tilde{X}$, we have
\begin{align}
\hat{T}_{\tilde x^n}(a) &= \frac{1}{n}\sum_{t = 1}^n \mathbf{1}\{a = \tilde x_t\}\label{empirical distribution_1}\\ &= \frac{1}{n}\sum_{t = 1}^n Q_{\tilde X_t|\tilde X^n = \tilde x^n}(a),\label{empirical distribution_2}
\end{align}
where \eqref{empirical distribution_2} follows from the fact that, conditioned on $\tilde{X}^n=\tilde x^n$, the value at each position is determined.\\
For any $a\in \tilde{\calX}$, $b\in \calW$ and $\tilde{x}^n\in\calT_{\delta}^{(n)}(Q_{\tilde{X}|W}|w^n)$, we obtain
\begin{align}
\bigl|\hat{T}_{\tilde x^n}(a) - Q_{\tilde{X}}(a)\bigr|
&=\Bigl| \sum_{b\in \calW} \hat{T}_{\tilde x^n,w^n}(a,b)- \sum_{b\in \calW}Q_{\tilde XW}(a,b)\Bigr|\label{TV_1} \\
&\leq \sum_{b\in \calW}\left|\hat{T}_{\tilde x^n,w^n}(a,b)-Q_{\tilde XW}(a,b)\right|\label{TV_2}\\
&\leq \sum_{b\in \calW}\delta Q_{\tilde XW}(a,b)\label{TV_3}\\
&= \delta Q_{\tilde X}(a),\label{TV_4}
\end{align}
where \eqref{TV_1} follows from the law of total probability, \eqref{TV_2} follows from the triangle inequality, \eqref{TV_3} is due to the definition of the conditional typical set \eqref{conditional_typeX} and \eqref{tilde X in Q_XW}, and \eqref{TV_4} follows from the law of total probability.
Moreover, for $t\in[n]$, we demonstrate that $Q_{\hat{X}_t}$ converges to $Q_{\tilde{X}}$ in TV distance as $\delta \to 0$. Specifically, 
\begin{align}
&d_{\mathrm{TV}}\!\bigl( Q_{\hat{X}_t}, Q_{\tilde{X}}\bigr)\nn\\
&=d_{\mathrm{TV}}\!\Bigl( \frac{1}{n}\sum_{t\in[n]} Q_{\tilde{X}_t}, Q_{\tilde{X}}\Bigr)\label{TV_estimate_1}\\
&= \sum_{a\in \tilde \calX} \Bigl|\frac{1}{n}\sum_{t\in[n]} Q_{\tilde{X}_t}(a) - Q_{\tilde{X}}(a) \Bigr|\label{TV_estimate_2} \\
&= \sum_{a\in \tilde \calX} \biggl|\frac{1}{n}\sum_{t\in[n]} \Bigl(\sum_{\tilde{x}^n\in \calT_{\delta}^{(n)}(Q_{\tilde{X}|W}|w^n)} \mathrm{U}_{\calT_{\delta}^{(n)}(Q_{\tilde{X}|W}|w^n)}(\tilde{x}^n) Q_{\tilde{X}_t|\tilde{X}^n=\tilde{x}^n}(a)\Bigr)- Q_{\tilde{X}}(a)\sum_{\tilde{x}^n\in \calT_{\delta}^{(n)}(Q_{\tilde{X}|W}|w^n)} \mathrm{U}_{\calT_{\delta}^{(n)}(Q_{\tilde{X}|W}|w^n)}(\tilde{x}^n)\biggl|\label{TV_estimate_3} \\
&= \sum_{a\in \tilde \calX} \biggl|\sum_{\tilde{x}^n\in \calT_{\delta}^{(n)}(Q_{\tilde{X}|W}|w^n)} \mathrm{U}_{\calT_{\delta}^{(n)}(Q_{\tilde{X}|W}|w^n)}(\tilde{x}^n)\hat{T}_{\tilde x^n}(a) - \sum_{\tilde{x}^n\in \calT_{\delta}^{(n)}(Q_{\tilde{X}|W}|w^n)} \mathrm{U}_{\calT_{\delta}^{(n)}(Q_{\tilde{X}|W}|w^n)}(\tilde{x}^n)Q_{\tilde{X}}(a)\biggr|\label{TV_estimate_4} \\
&\leq \sum_{a\in \tilde \calX}\sum_{\tilde{x}^n\in \calT_{\delta}^{(n)}(Q_{\tilde{X}|W}|w^n)} \mathrm{U}_{\calT_{\delta}^{(n)}(Q_{\tilde{X}|W}|w^n)}(\tilde{x}^n)\Bigl|\hat{T}_{\tilde x^n}(a) - Q_{\tilde{X}}(a)\Bigr|\label{TV_estimate_5}\\
&\leq \sum_{a\in \tilde \calX}\sum_{\tilde{x}^n\in \calT_{\delta}^{(n)}(Q_{\tilde{X}|W}|w^n)} \mathrm{U}_{\calT_{\delta}^{(n)}(Q_{\tilde{X}|W}|w^n)}(\tilde{x}^n) \delta Q_{\tilde X}(a)\label{TV_estimate_6}\\
&= \delta, \label{TV_estimate_7}
\end{align} 
where \eqref{TV_estimate_1} follows from \eqref{checkX}, \eqref{TV_estimate_2} is due to the definition of TV distance, \eqref{TV_estimate_3} follows from the law of total probability, \eqref{TV_estimate_4} follows from \eqref{empirical distribution_1} and \eqref{empirical distribution_2}, \eqref{TV_estimate_5} is due to the triangle inequality, \eqref{TV_estimate_6} follows from \eqref{TV_4}, and \eqref{TV_estimate_7} follows from the fact that the probabilities sum to $1$.
Since $d_1(P_X,Q)$ is continuous in $Q$ over the interior of the probability simplex defined on $\tilde{\mathcal{X}}$, it satisfies the Lipschitz condition with some Lipschitz constant $h>0$ when $\delta$ is sufficiently close to zero. For $t\in[n]$, we have:
\begin{align}
\bigl|d_1(P_X,Q_{\hat{X}_t}) -d_1(P_X,Q_{\tilde{X}})\bigr| 
&\leq h\times d_{\mathrm{TV}}( Q_{\hat{X}_t}, Q_{\tilde{X}})\label{differ_perception_1}\\ 
&\leq h \times \delta\label{differ_perception_2}\\
&\leq h \frac{\varepsilon}{2\bigl(H(\tilde X|W)+1\bigr)},\label{differ_perception_3}
\end{align}
where \eqref{differ_perception_2} follows from \eqref{TV_estimate_7}, and  \eqref{differ_perception_3} follows from the choice $\delta \le \frac{\varepsilon}{2(H(\tilde{X}|W)+1)}$. 
Therefore, defining $\kappa_1:=\frac{h}{2\left(H(\tilde X|W)+1\right)}-1$, for $t\in[n]$, we obtain
\begin{align}
d_1(P_X,Q_{\hat{X}_t}) 
&\leq d_1(P_X,Q_{\tilde{X}}) +h \frac{\varepsilon}{2\bigl(H(\tilde X|W)+1\bigr)}\label{perception_est_1} \\
&\leq P_1 + \varepsilon\Bigl(\frac{h}{2\bigl(H(\tilde X|W)+1\bigr)}-1\Bigr)\label{perception_est_2}\\
&= P_1 +\varepsilon\kappa_1, \label{eq:perception_est_final}
\end{align}
where \eqref{perception_est_1} follows from \eqref{differ_perception_3}, and \eqref{perception_est_2} follows from \eqref{assump1_3}. Note that $\kappa_1$ is a constant determined by $H(\tilde{X}|W)$ and the Lipschitz constant $h$. The term $\varepsilon\kappa_1$ vanishes as $\varepsilon \to 0$, implying that the perception constraint holds asymptotically.

Now, we bound the rates $R_{0,\rm{cr}}$, $R_{1,\rm{cr}}$, and $R_{2,\rm{cr}}$. Recalling the definitions $M_0:=\lfloor\exp(n\left(I(X,Y;W) + 2\delta_1\right))\rfloor$, $M_1:=\lfloor \exp({n ( I(X;\tilde X|W) + 2\delta^{\prime})}) \rfloor$, and the parameters $\delta_1=\delta\bigl(H(W) + H(W|XY)\bigr)$ and  $\delta^{\prime}=\delta(H(\tilde{X}|W)+1)$, we obtain 
\begin{align}
R_{0,\rm{cr}}
&= \frac{1}{n}\log{M_0}\label{R0,cr,1}\\
&\leq I(X,Y;W) + 2\delta_1,
\end{align}
where $\delta_1\to 0$ as $\delta\to0$. Furthermore, we have
\begin{align}
R_{1,\rm{cr}} 
&= \frac{1}{n} \log{M_1}\\ 
&\leq I(X;\tilde{X}|W)+2\delta'\\
&\leq  I(X;\tilde{X}|W)+\varepsilon\label{R1,cr,0}\\
&\leq R_{X|W}(Q_{XW}, D_1-2\varepsilon,P_1-2\varepsilon)+  2\varepsilon\label{R1,cr,1}\\
&\le R_{X|W}(Q_{XW},D_1, P_1) +  3\varepsilon,\label{R1,cr,2}
\end{align}
where \eqref{R1,cr,0} follows from the condition $\delta\leq \frac{\varepsilon}{2(H(\tilde{X}|W)+1)}$, \eqref{R1,cr,1} follows from \eqref{assump1_1}, and \eqref{R1,cr,2} holds for sufficiently small $\varepsilon$ due to the continuity of $R_{X|W}(Q_{XW},D_1, P_1)$ with respect to $D_1>0$ and $P_1>0$. Similarly, we obtain $R_{2,\rm{cr}} \leq R_{Y|W}(Q_{YW},D_2,P_2)+3\varepsilon.$\\
\subsubsection{Common Randomness Removal}
In the following analysis, we aim to eliminate the common randomness to obtain a deterministic coding scheme. We perform the analysis for the source sequence $X^n$ since the analysis for $Y^n$ is analogous. The main idea is to simulate $K$ using a negligible fraction of the source sequence. Define $n_0 :=\lfloor n\alpha \rfloor$, where $\alpha>0$ is a sufficiently small constant. The feasibility of choosing such a small $\alpha$ relies on the key observation that $K$ can be simulated using a negligible fraction of source symbols as $H(K)$ is sublinear in $n$. Let $P_{\max} := \max_{(x,y) \in \mathcal{X}\times\mathcal{Y}} P_{XY}(x,y)$, and the maximum probability of $P_{XY}^{n_0}$ is $P_{\max}^{n_0}$, which tends to 0 exponentially as $n \to \infty$. The construction in \cite[Appendix B]{chen_2022_on-the-rate-distortion-perception-function} can be readily extended to the joint random variable $(X,Y)$. Therefore, for $b\in[0:n-1]$, there exists a map $\omega: \mathcal{X}^{n_0}\times\mathcal{Y}^{n_0} \to [0 : n-1]$ satisfying
\begin{align}
\bigl|Q_{\omega(X_{n+1}^{n+n_0},Y_{n+1}^{n+n_0})}(b) - \tfrac{1}{n}\bigr| \le P_{\max}^{n_0}.\label{sim_K_uniform}
\end{align}
Such a mapping $\omega$ guarantees that we can simulate $K$ that is uniformly distributed over $[0 : n-1]$ using $n_0$ symbols.

We then construct the deterministic encoder and decoder as follows:

\noindent\textbf{Encoder:} The encoder operates in two steps. First, it applies the above simulation code on the $(n+1)$-th to $(n+n_0)$-th source symbols to generate $\tilde{K}$. Second, $f_0$, $f_1$ and $f_2$ operate to generate the messages $s_0$, $s_1$ and $s_2$ respectively, i.e.,
\begin{align}
\tilde{K} &= \omega\left(X_{n+1}^{n+n_0},Y_{n+1}^{n+n_0}\right), \\
s_i &= f_i\bigl(\pi_{- \tilde{K}}(X^n,Y^n)\bigr), \,i\in[0:2].
\end{align}
\noindent\textbf{Decoder:} Upon receiving $\tilde{K}$, $s_0$, $s_1$ and $s_2$, the decoders $\phi_1$ and $\phi_2$ generate the reproduced sequences as follows:
\begin{align}
\check{X}^n &= \pi_{\tilde{K}}\bigl(\phi_1(s_0,s_1)\bigr), \\
\check{Y}^n &= \pi_{\tilde{K}}\bigl(\phi_2(s_0,s_2)\bigr), \\
(\check{X}_{n+j},
\check{Y}_{n+j}) &:= (\check{X}_j,\check{Y}_j), \,j \in [n_0].\label{decoder_n_0}
\end{align}
For $t\in[n]$, $a\in\cal\tilde{X}$ and $b\in[0:n-1]$, it follows from the construction that
\begin{align}
Q_{\tilde{K}}(b)&=Q_{\omega(X_{n+1}^{n+n_0},Y_{n+1}^{n+n_0})}(b),\label{simu_K}\\
Q_{\check{X}_t|\tilde K}(a|b) &= Q_{\hat X_t | K}(a|b).\label{check_hat}
\end{align}
Then, we show that the deterministic code satisfies the desired properties:
\begin{align}
\frac{1}{n}\sum_{t\in[n]} \bbE[\Delta_1(X_t, \check{X}_t)] 
&\leq D_1-\frac{\varepsilon}{2},\label{nr_distortion_1}
\end{align}
and
\begin{align}
\bbE[\Delta_1(X_{n+j}, \check{X}_{n+j})] &\leq \bbE_{P_X} \Bigl[ \max_{\tilde{x} \in \tilde{\mathcal{X}}} \Delta_1(X, \tilde{x}) \Bigr] < \infty, \;j \in [n_0],\label{max_dis}
\end{align}
where \eqref{nr_distortion_1} follows from \eqref{distortion_measure_1}--\eqref{distortion_measure_8}. 

Consequently, we obtain:
\begin{align}
\frac{1}{n + n_0} \sum_{t\in[n+n_0]} \bbE[\Delta_1(X_t, \check{X}_t)]
&=\frac{1}{n + n_0} \sum_{t\in[n]}\bbE[\Delta_1(X_t, \check{X}_t)]+\frac{1}{n + n_0} \sum_{j\in[n_0]} \bbE[\Delta_1(X_{n+j}, \check{X}_{n+j})]\\
&\leq \frac{n}{n + n_0}(D_1-\frac{\varepsilon}{2}) + \frac{1}{n + n_0} \sum_{j\in[n_0]} \bbE[\Delta_1(X_{n+j}, \check{X}_{n+j})]\label{reconstruction_D1_1}\\
&\leq \frac{n}{n + n_0}(D_1-\frac{\varepsilon}{2}) + \frac{n_0}{n + n_0}\bbE_{P_X} \Bigl[ \max_{\tilde{x} \in \tilde{\mathcal{X}}} \Delta_1(X, \tilde{x}) \Bigr]\label{reconstruction_D1_2} \\
&=\frac{1}{1 + \alpha}(D_1-\frac{\varepsilon}{2}) + \frac{\alpha}{1 + \alpha}\bbE_{P_X} \Bigr[ \max_{\tilde{x} \in \tilde{\mathcal{X}}} \Delta_1(X, \tilde{x}) \Bigr]\label{reconstruction_D1_3}\\
&\leq D_1 -\frac{\varepsilon}{2} + \frac{\varepsilon}{2}\label{reconstruction_D1_4}\\
&= D_1,
\end{align}
for sufficiently large $n$ and sufficiently small $\alpha>0$. Here \eqref{reconstruction_D1_1} follows from \eqref{nr_distortion_1}, \eqref{reconstruction_D1_2} follows from \eqref{max_dis}, \eqref{reconstruction_D1_3} follows from the definition $n_0 :=\lfloor n\alpha \rfloor$, and \eqref{reconstruction_D1_4} is obtained by choosing $\alpha$ small enough such that $\frac{\alpha}{1 + \alpha}\bbE_{P_X} \left[ \max_{\tilde{x} \in \tilde{\mathcal{X}}} \Delta_1(X, \tilde{x}) \right]\leq \varepsilon/2 $.

Next, for $t\in[n]$ and any $\varepsilon>0$, we can obtain
\begin{align}
d_{\rm{TV}}(Q_{\check X_t}, Q_{\hat{X}_t}) 
&= \sum_{a\in \tilde \calX} \bigl| Q_{\check{X}_t}(a) - Q_{\hat{X}_t}(a) \bigr|\label{TV_sour_rec_1} \\
&=\sum_{a\in \tilde \calX}\Bigl| \sum_{b\in[0:n-1]}\bigl(Q_{\check{X}_t|\tilde{K}}(a|b)Q_{\tilde{K}}(b) - Q_{\hat{X}_t|K}(a|b)Q_{K}(b)\bigr)\Bigr|\label{TV_sour_rec_2}\\
&= \sum_{a\in \tilde \calX}\Bigl| \sum_{b\in[0:n-1]}Q_{\check{X}_t|\tilde{K}}(a|b)\bigl(Q_{\tilde{K}}(b) - Q_{K}(b)\bigr) \Bigr|\label{TV_sour_rec_3}\\
&= \sum_{a\in \tilde \calX}\Bigl| \sum_{b\in[0:n-1]}Q_{\check{X}_t|\tilde{K}}(a|b)\left(Q_{\omega(X_{n+1}^{n+n_0},Y_{n+1}^{n+n_0})}(b) - \tfrac{1}{n}\right) \Bigr|\label{TV_sour_rec_4}\\
&\leq \sum_{a\in \tilde \calX,b\in[0:n-1]}Q_{\check{X}_t|\tilde{K}}(a|b)\Bigl|Q_{\omega(X_{n+1}^{n+n_0},Y_{n+1}^{n+n_0})}(b) - \tfrac{1}{n} \Bigr|\label{TV_sour_rec_5}\\
&\leq \sum_{a\in \tilde \calX,b\in[0:n-1]}Q_{\check{X}_t|\tilde{K}}(a|b)P_{\max}^{n_0}\label{TV_sour_rec_6}\\
&=nP_{\max}^{n_0}\label{TV_sour_rec_7}\\
&\leq \frac{\varepsilon}{2},\label{TV_sour_rec_8}
\end{align}
where \eqref{TV_sour_rec_1} is due to the definition of TV distance, \eqref{TV_sour_rec_2} follows from the law of total probability, \eqref{TV_sour_rec_3} holds due to \eqref{check_hat}, \eqref{TV_sour_rec_4} follows from \eqref{simu_K} and the fact that $K$ is uniformly distributed over $[0:n-1]$, \eqref{TV_sour_rec_5} follows from the triangle inequality, \eqref{TV_sour_rec_6} is due to \eqref{sim_K_uniform}, \eqref{TV_sour_rec_7} follows from the fact that probability mass functions sum to $1$, and \eqref{TV_sour_rec_8} holds because $P_{\max}^{n_0}$ tends to 0 exponentially as $n \to \infty$.\\
Since $d_1(P_X,Q)$ is continuous in $Q$ over the interior of the probability simplex, it satisfies the Lipschitz condition with some Lipschitz constant $h_2>0$ when $\delta$ is sufficiently close to zero. We have:
\begin{align}
|d_1(P_X, Q_{\check{X}_t}) -  d_1(P_X, Q_{\hat{X}_t})|\leq h_2 \times d_{\rm{TV}}(Q_{\check X_t}, Q_{\hat{X}_t}), \; t \in [n + n_0].\label{n+n_0_distribution}
\end{align}
Note that the validity of \eqref{n+n_0_distribution} for the extended indices $t \in [n+1 : n+n_0]$ follows directly from the construction of the deterministic decoder in \eqref{decoder_n_0}. Specifically, by setting $\check{X}_{n+j} := \check{X}_j$ for $j \in [n_0]$, we ensure that the marginal distribution at time $n+j$ is identical to that at time $j$, i.e., $Q_{\check{X}_{n+j}} = Q_{\check{X}_j}$. The inequality naturally extends to the entire block length $n+n_0$. Now, defining $\kappa_3:=\frac{h}{H(\tilde X|W)+1}+h_2-2$, we obtain for $t\in[n+n_0]$:
\begin{align}
d_1(P_X, Q_{\check{X}_t}) &\leq d_1(P_X, Q_{\hat{X}_t}) +  h_2 \times d_{\rm{TV}}(Q_{\check X_t}, Q_{\hat{X}_t})\label{recon_perception_0}\\
&\leq P_1 +\varepsilon\Bigl(\frac{h}{2\bigl(H(\tilde X|W)+1\bigr)}-1\Bigr) +  h_2 \times d_{\rm{TV}}(Q_{\check X_t}, Q_{\hat{X}_t})\label{recon_perception_1}\\
&\leq P_1 + \varepsilon \Bigl(\frac{h}{2\bigl(H(\tilde X|W)+1\bigr)}-1\Bigr) +h_2\frac{\varepsilon}{2}\label{recon_perception_2}\\
&= P_1 +\frac{\varepsilon}{2}\Bigl(\frac{h}{H(\tilde X|W)+1}+h_2-2\Bigr)\\
&= P_1 + \frac{\varepsilon}{2}\kappa_3,
\end{align}
where \eqref{recon_perception_0} follows from \eqref{n+n_0_distribution}, \eqref{recon_perception_1} follows from \eqref{perception_est_2}, and \eqref{recon_perception_2} follows from \eqref{TV_sour_rec_8}. The term $\frac{\varepsilon}{2}\kappa_3$ vanishes as $\varepsilon \to 0$, implying that the perception constraint holds asymptotically.

Following the derivation steps in \eqref{R0,cr,1}--\eqref{R1,cr,2}, we obtain
\begin{align}
R_{0,\rm{nr}} &= \frac{1}{n+n_0}(\log{M_0}+\log n)\\
&\leq I(X,Y;W) + 2\delta_1+\frac{\log{n}}{n+n_0} \\
&= I(X,Y;W) + 2\delta(H(W)+H(W|XY))+\frac{\log{n}}{n+n_0},\label{R_0_nr_1}
\end{align}
and
\begin{align}
R_{1,\rm{nr}} 
&= \frac{1}{n+n_0}\left( \log{M_1} + \log{n}\right)\\ 
&\leq \frac{n}{n+n_0}\left( R_{X|W}(Q_{XW}, D_1-2\varepsilon,P_1-2\varepsilon) + 2\varepsilon\right) + \frac{\log{n}}{n+n_0}\label{12}\\
&\le R_{X|W}(Q_{XW}, D_1-2\varepsilon,P_1-2\varepsilon) + 2\varepsilon + \frac{\log{n}}{n+n_0}\\
&\le R_{X|W}(Q_{XW},D_1, P_1) +  3\varepsilon + \frac{\log{n}}{n+n_0},\label{13}
\end{align}
where \eqref{R_0_nr_1} follows from the definition $\delta_1=\delta(H(W)+H(W|XY))$. Note that the term $\frac{\log{n}}{n+n_0}\to0$ as $n\to\infty$. 
Similarly, we obtain $R_{2,\rm{nr}} \leq R_{Y|W}(Q_{YW},D_2, P_2) +  3\varepsilon + \frac{\log{n}}{n+n_0}$.
\subsection{Converse Proof}
Finally, we establish the converse part. Let $(R_{0,\rm{cr}},R_{1,\rm{cr}},R_{2,\rm{cr}})$ be an achievable rate triple with respect to the distortion and perception constraints $(D_1,D_2,P_1,P_2)$. Let $\hat{X}^n$ be the
reproduced sequence corresponding to $X^n$, and let $\hat{Y}^n$ be the reproduced sequence corresponding to $Y^n$. Assume that the sequences satisfy $\frac{1}{n} \sum_{i=1}^n\bbE_{Q_{X_i\hat X_i}}[\Delta_1(X_i,\hat X_i)]\leq D_1$, $\frac{1}{n} \sum_{i=1}^n\bbE_{Q_{Y_i\hat Y_i}}[\Delta_2(Y_i,\hat Y_i)]\leq D_2$, $d_1(P_X,Q_{\hat{X}_i})\leq P_1$ and $d_2(P_Y,Q_{\hat{Y}_i})\leq P_2$, $i\in[n]$. Then we obtain
\begin{align}
R_{0,\rm{cr}} 
&\geq \frac{1}{n}H(S_0|K)\\
&\geq \frac{1}{n}\left( H(S_0|K) - H(S_0|X^n,Y^n,K)\right)\\
&= \frac{1}{n}I(X^n,Y^n;S_0|K)\\
&=\frac{1}{n}I(X^n,Y^n;S_0|K) + \frac{1}{n}I(X^n,Y^n;K)\label{indep_K}\\
&=\frac{1}{n}I(X^n,Y^n;S_0,K)\\
&\geq \frac{1}{n}I(X^n,Y^n;S_0)\\
&= \frac{1}{n}\sum_{i=1}^n I(X_i,Y_i;S_0|X^{i-1},Y^{i-1})\\
&=\frac{1}{n}\sum_{i=1}^n I(X_i,Y_i;S_0|X^{i-1},Y^{i-1}) + \frac{1}{n}\sum_{i=1}^n I(X_i,Y_i;X^{i-1},Y^{i-1})\label{memoryless_source}\\
&=\frac{1}{n}\sum_{i=1}^n I(X_i,Y_i;S_0,X^{i-1},Y^{i-1}) \\
&=\frac{1}{n}\sum_{i=1}^n I(X_i,Y_i;W_i)\label{eq:Wi_1}\\
&= \frac{1}{n}\sum_{i=1}^n I(X_i,Y_i;W_i|Q=i)\label{eq:Q_1}\\
&= I(X_Q,Y_Q;W_Q|Q)\label{eq:Q_2}\\
&= I(X_Q,Y_Q;W_Q,Q)\label{eq:Q_3}\\
&= I(X,Y;W),\label{eq:Q_4}
\end{align}
where \eqref{indep_K} follows from the fact that $K$ is independent of $(X^n,Y^n)$, \eqref{memoryless_source} is due to the memoryless property of the sources, \eqref{eq:Wi_1} holds because $W_i:=(S_0,X^{i-1},Y^{i-1})$, \eqref{eq:Q_1} holds because $Q$ is uniformly distributed over $[n]$ and independent of $(X^n,Y^n)$, \eqref{eq:Q_4} follows from $W:=(W_Q,Q)$.

Furthermore,
\begin{align}
R_{1,\rm{cr}}
&\geq \frac{1}{n}H\left(S_1|K\right)\\
&\geq \frac{1}{n}H\left(S_1|K,S_0\right)\\
&\geq \frac{1}{n}I(X^n,Y^n;S_1|K,S_0)\\
&\geq \frac{1}{n}I(X^n,Y^n;\hat X^n|K,S_0)\label{eq:DPI_hatX}\\
&=\frac{1}{n}\left(I(X^n,Y^n;\hat X^n|K,S_0) + I(X^n,Y^n;K|S_0) \right)\label{eq:iid_K}\\
&=\frac{1}{n}I(X^n,Y^n;\hat X^n, K|S_0)\\
&\geq \frac{1}{n}I(X^n,Y^n;\hat X^n|S_0)\\
&=\frac{1}{n}\sum_{i=1}^n I(X_i,Y_i;\hat X^n|S_0,X^{i-1},Y^{i-1})\\
&=\frac{1}{n}\sum_{i=1}^n I(X_i,Y_i;\hat X^n|W_i)\label{eq:Wi_2}\\
&\geq \frac{1}{n}\sum_{i=1}^n I(X_i;\hat X_i|W_i)\\
&\geq \frac{1}{n} \sum_{i=1}^nR_{X|W}\left(Q_{XW},\bbE_{Q_{X_i\hat X_i}}[\Delta_1(X_i,\hat X_i)], d_1(P_{X_i},Q_{\hat X_i})\right)\label{converse_RDP}\\
&\geq R_{X|W}\Bigl(Q_{XW},\frac{1}{n} \sum_{i=1}^n\bbE_{Q_{X_i\hat X_i}}[\Delta_1(X_i,\hat X_i)], \frac{1}{n} \sum_{i=1}^nd_1(P_{X_i},Q_{\hat X_i})\Bigr)\label{convexity_RDP_1}\\
&\geq R_{X|W}(Q_{XW},D_1,P_1),
\end{align}
where \eqref{eq:DPI_hatX} follows from the data processing inequality, \eqref{eq:iid_K} follows from the fact that $K$ is independent of $(X^n,Y^n)$, \eqref{converse_RDP} follows from \eqref{conditional RDP}, \eqref{eq:Wi_2} holds because $W_i:=(S_0,X^{i-1},Y^{i-1})$, and \eqref{convexity_RDP_1} follows from the convexity of $R_{X|W}(Q_{XW},D_1,P_1)$ with respect to $(D_1,P_1)$~\cite[Proposition 2]{Salehkalaibar_2024_Rate-Distortion-Perception-Tradeoff-Based-on-the-Conditional-Distribution-Perception-Measure} and Jensen’s inequality~\cite[Theorem 2.6.2]{Thomas_2006_Elements-of-information-theory}.
Similarly, we have $R_{2,\rm{cr}}\geq R_{Y|W}(Q_{YW},D_2,P_2)$. 

It is straightforward to establish the following ordering by using the operational interpretations of the relevant quantities~\cite[Remark 2]{chen_2022_on-the-rate-distortion-perception-function}:
\begin{align}
R_{i,\rm{cr}} \leq R_{i,\rm{pr}} \leq R_{i,\rm{nr}},\,i \in [0:2].
\end{align}
Thus, we have:
\begin{align}
R_{0,\rm{nr}} \geq R_{0,\rm{cr}} \geq I(X,Y;W),
\end{align}
\begin{align}
R_{1,\rm{nr}} \geq R_{1,\rm{cr}} \geq R_{X|W}(Q_{XW},D_1, P_1),
\end{align}
and
\begin{align}
R_{2,\rm{nr}} \geq R_{2,\rm{cr}} \geq R_{Y|W}(Q_{YW},D_2, P_2).
\end{align}
This completes the proof.
\end{document}